\def\mtline#1{\hbox to#1{\hrulefill}}
\newtheorem{theorem}{Theorem}
\newtheorem{remark}{Remark}
\newtheorem{definition}{Definition}
\newtheorem{conjecture}{Conjecture}
\newtheorem{proposition}{Proposition}
\renewcommand{\thetable}{\arabic{section}.\arabic{table}}
\begin{document}
\fancyhf{}

\fancyhead[LO]{Classification of Toric Codes} \cfoot{\thepage}

\title{\Large{\textbf{ On Classification of Toric Surface Codes
of low Dimension}}
\date{}
 }
 \author{Xue Luo}
\thanks{School of Mathematics and System Sciences, Beihang University, Beijing, P. R. China, 100191, email:\,xluo@buaa.edu.cn}
\author{Stephen S.-T. Yau}
\thanks{Department of Mathematical Sciences, Tsinghua University, Beijing, P. R. China, 100084, and Department of Mathematics, Statistics, and Computer Science, University of Illinois at Chicago, IL, 60607, email:\,yau@uic.edu}
\author{Mingyi Zhang}
\thanks{Department of Mathematical Sciences, Tsinghua University, Beijing, P. R. of China, 100084, email:\,zmysljk@gmail.com}
\author{Huaiqing Zuo}
\thanks{Mathematical Sciences Center, Tsinghua University,  Beijing, P. R. China,  100084, email:\,hqzuo@math.tsinghua.edu.cn}
 \maketitle

 \renewcommand{\abstractname}{Abstract}
\begin{abstract}
This work is a natural continuation of our previous work \cite{yz}.
In this
paper, we give a complete classification of toric surface
codes of dimension equal to 6, except a special pair,
$C_{P_6^{(4)}}$ and $C_{P_6^{(5)}}$ over $\mathbb{F}_8$. Also, we
give an example, $C_{P_6^{(5)}}$ and $C_{P_6^{(6)}}$ over
$\mathbb{F}_7$, to illustrate that two monomially equivalent toric
codes can be constructed from two lattice non-equivalent polygons.
\end{abstract}

\begin{center}
\section{Introduction}
\end{center}

Toric codes, which were introduced by J. Hansen \cite{h1}, are
constructed on toric varieties. These toric codes have attracted quite a bit of attentions in the last decade, because they are, in some sense, a natural
extension of Reed-Solomon codes, which have been studied recently in
\cite{dgv}, \cite{h1}, \cite{h2}, \cite{ls1}, \cite{ls2}, \cite{j},
etc. 

Compared to the other codes, the toric codes have their own
advantage for study. The properties of these codes are closely tied
to the geometry of the toric surface $X_P$ associated with the
normal fan $\triangle_P$ of the polygon $P$. Thanks to this
advantage, D. Ruano \cite{rua} estimated the minimum distance using
intersection theory and mixed volumes, extending the methods of J.
Hansen for plane polygons. J. Little and H. Schenck \cite{ls1}
obtained upper and lower bounds on the minimum distance of a toric
code constructed from a polygon $P\subset\mathbb{R}^2$ by examining
Minkowski sum decompositions of subpolygons of $P$. The most
interesting things are that J. Little and R. Schwarz \cite{ls2} used
a more elementary approach to determine the minimum distance of
toric codes from simplices and rectangular polytopes.  They also proved a general result
that if there is a unimodular integer affine transformation taking
one polytope $P_1$ to another polytope $P_2$, i.e. $P_1$ and $P_2$ are
lattice equivalent (Definition \ref{df3}), then the corresponding
toric codes are monomially equivalent (hence have the same
parameters). However, the reverse implication is not true. An
explicit example will be given in our paper to illustrate this statement.
Based on this useful tool, they classified the toric surface codes
with a small dimension. However, one case of toric codes of dimension
$5$ was missing in their classification of toric surface codes. In
\cite{yz}, the second and the last author of this paper supplemented the missing case and completed the proof of classification of toric codes with
dimension less than or equal to $5$.  There are other families of higher dimensional toric codes for which the minimum distance is computed explicitly, see \cite{ss2}.

In this paper,
we give a complete classification of toric surface codes of
dimension equal to $6$. Also some interesting phenomena
have been discovered in the process of our proofs. On the one hand, we give an
explicit example that two monomially equivalent toric codes can be constructed from two lattice non-equivalent polygons, see Proposition
\ref{over F7}. On the other hand, the number of the codewords in $C_P$ over
$\mathbb{F}_q$ with some particular weight can be varied by the
choice of $q$, see Table \ref{table-2}-\ref{table-5}. The methods in this paper shed a light on classification of toric surface codes of higher dimension and it may give better champion codes than those in \cite{bk}.

The main results in this paper are stated below:
\begin{theorem}\label{th1}
 Every toric surface code
with $3\leq k\leq6$,  where $k$ is the dimension of the code, is
monomially equivalent to one constructed from one of the
polygons in Fig. \ref{fig-k=3}-\ref{fig-k=5} or Fig. \ref{fig-k=6}.
\end{theorem}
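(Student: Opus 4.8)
The plan is to reduce Theorem~\ref{th1} to a combinatorial classification of lattice polygons and then carry out that classification. Recall that the dimension of $C_P$ equals the number $|P\cap\mathbb{Z}^2|$ of lattice points of $P$, and that, by the theorem of Little and Schwarz \cite{ls2} quoted above, lattice-equivalent polygons (Definition~\ref{df3}) give monomially equivalent codes. Hence Theorem~\ref{th1} is equivalent to the assertion that, for each $k\in\{3,4,5,6\}$, every lattice polygon $P$ with $|P\cap\mathbb{Z}^2|=k$ is lattice-equivalent to one of the polygons drawn in Fig.~\ref{fig-k=3}-\ref{fig-k=5} and Fig.~\ref{fig-k=6}. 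For $3\le k\le 5$ this is precisely the classification obtained in \cite{ls2} and completed in \cite{yz}, so the whole burden falls on the case $k=6$.

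For $k=6$ I would argue inductively, exploiting the elementary fact that if $p$ is a vertex of a lattice polygon $P$ then $P':=\mathrm{conv}\!\big((P\cap\mathbb{Z}^2)\setminus\{p\}\big)$ is again a lattice polygon, $P=\mathrm{conv}(P'\cup\{p\})$, and $|P'\cap\mathbb{Z}^2|=|P\cap\mathbb{Z}^2|-1$. Thus every lattice polygon with $6$ lattice points arises from one of the finitely many (already classified) lattice polygons $P'$ with $5$ lattice points — a two-dimensional one, or the segment with $5$ collinear points — by adjoining one new lattice point $p$, subject only to the condition that $\mathrm{conv}(P'\cup\{p\})$ acquires no further lattice points. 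Finiteness of this search is the crux: for a fixed $P'$ the admissible $p$ lie in a bounded region, since if $p$ is far from $P'$ then $\mathrm{conv}(P'\cup\{p\})$ already contains many lattice points (when $P'$ is two-dimensional, a long thin triangle erected over an edge of $P'$ contains at least as many lattice points as its height; when $P'$ is a segment the new polygon is a triangle over that segment and is examined directly). So for each $5$-point class one runs through a short explicit list of candidate $p$, keeps those giving exactly $6$ lattice points, reduces the resulting polygons modulo $GL_2(\mathbb{Z})\ltimes\mathbb{Z}^2$, and forms the union over all $P'$.

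It is convenient to organize and cross-check this computation by the lattice width $w(P)$. If $w(P)=1$ then $P$ lies in a horizontal strip of height $1$ and is a trapezoid determined, up to reflection, by the numbers $(\ell_0,\ell_1)$ of lattice points on its two horizontal edges with $\ell_0+\ell_1=6$, namely $(5,1),(4,2),(3,3)$. If $w(P)\ge 2$ one first observes that in fact $w(P)=2$, since a lattice polygon of width $\ge 3$ already has more than $6$ lattice points; realizing the width along the $y$-axis puts $P$ in the strip $\mathbb{R}\times[0,2]$, its horizontal extent is then bounded (otherwise the slice $y=1$ alone contributes too many lattice points), and one enumerates the finitely many shapes with exactly $6$ lattice points modulo shears and reflections. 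Assembling the two lists produces the polygons of Fig.~\ref{fig-k=6}; that there are no repetitions is checked with the standard lattice invariants — dimension, number of vertices, the pair $(i,b)$ of interior and boundary lattice-point counts (equivalently the normalized area, via Pick's formula) and, when those coincide, the subdivision pattern that the lattice points induce on $\partial P$.

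I expect the main obstacle to be exactly the $w(P)\ge 2$ portion of the enumeration (equivalently, the inductive step of adjoining a point to a two-dimensional $5$-point polygon): one must be certain that the candidate search is genuinely exhaustive and that every lattice equivalence among the polygons produced has been detected. This is delicate bookkeeping — it is precisely at the analogous point that \cite{ls2} missed a case in dimension $5$ — so the argument has to be presented with care. The finer question of which of the resulting codes are monomially equivalent (and the undecided pair over $\mathbb{F}_8$) is not needed for Theorem~\ref{th1} itself and is treated separately, where one computes weight enumerators and exhibits explicit monomial maps.
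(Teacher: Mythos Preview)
Your proposal is correct, and the primary approach --- obtaining every six-point lattice polygon by adjoining one lattice point to each of the already-classified five-point polygons $P_5^{(j)}$ and then reducing modulo $GL_2(\mathbb{Z})\ltimes\mathbb{Z}^2$ --- is exactly the strategy the paper follows. The paper makes the finiteness of the search more explicit by invoking Pick's formula directly (for example, when the new point $V$ is added off the $x$-axis to $P_5^{(1)}$, Pick forces the resulting triangle to have area $2$, pinning $V$ to the lines $y=\pm 1$), and it records the full case analysis in a table rather than arguing abstractly for boundedness. Your lattice-width organization is a sensible cross-check that the paper does not use; it would catch exactly the kind of bookkeeping slip you flag from the $k=5$ classification, though for the proof as written it is supplementary rather than essential.
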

\bigskip

\begin{figure}[!h]
\setlength{\unitlength}{1mm}
\begin{picture}(50,30)(-15,-8)
\put(0,0){\vector(0,1){25}} \put(0,0){\vector(1,0){40}}
\multiput(0,0)(15,0){3}{\circle*{1.5}} \put(-0.5,15){\line(1,0){1}}
\thicklines \put(0,0){\line(1,0){30}} \put(1,-6){1} \put(15,-6){$x$}
\put(30,-6){$x^2$} \put(-6,15){$y$} \put(10,20){$P_3^{(1)}$}
\end{picture}
\hspace{10mm} \setlength{\unitlength}{1mm}
\begin{picture}(50,30)(-10,-8)
\put(0,0){\vector(0,1){25}} \put(0,0){\vector(1,0){40}} \thicklines
\put(0,15){\line(1,-1){15}} \put(0,15){\line(0,-1){15}}
\put(0,0){\line(1,0){15}} \multiput(0,0)(15,0){2}{\circle*{1.5}}
\put(0,15){\circle*{1.5}} \thinlines \put(30,-0.5){\line(0,1){1}}
 \put(1,-6){1} \put(15,-6){$x$}
\put(30,-6){$x^2$} \put(-6,15){$y$} \put(10,20){$P_3^{(2)}$}
\end{picture}
	\caption{Polygons yielding toric codes with $k=3$}\label{fig-k=3}
\end{figure}
\begin{figure}[!h]
\setlength{\unitlength}{1mm}
\begin{picture}(40,30)(-12,-12)
\put(0,0){\vector(0,1){18}} \put(0,0){\vector(1,0){40}}
\multiput(0,0)(12,0){4}{\circle*{1.5}} \put(-0.5,12){\line(1,0){1}}
\thicklines \put(0,0){\line(1,0){36}} \put(1,-4){1} \put(13,-4){$x$}
\put(25,-4){$x^2$} \put(37,-4){$x^3$} \put(-4,12){$y$}
\put(10,20){$P_4^{(1)}$}
\end{picture}
\hspace{20mm} \setlength{\unitlength}{1mm}
\begin{picture}(40,30)(-12,-12)
\put(0,0){\vector(0,1){18}} \put(0,0){\vector(1,0){40}}
\multiput(0,0)(12,0){3}{\circle*{1.5}} \put(0,12){\circle*{1.5}}
\put(36,-0.5){\line(0,1){1}} \thicklines \put(0,0){\line(1,0){24}}
\put(0,0){\line(0,1){12}} \put(0,12){\line(2,-1){24}} \put(1,-4){1}
\put(13,-4){$x$} \put(25,-4){$x^2$} \put(37,-4){$x^3$}
\put(-4,12){$y$} \put(10,20){$P_4^{(2)}$}
\end{picture}\\
\setlength{\unitlength}{1mm}
\begin{picture}(40,30)(-12,-12)
\put(0,0){\vector(0,1){18}} \put(0,0){\vector(1,0){30}}
\multiput(0,0)(10,0){2}{\circle*{1.5}} \put(0,10){\circle*{1.5}}
\put(10,10){\circle*{1.5}} \put(20,-0.5){\line(0,1){1}} \thicklines
\put(0,0){\line(1,0){10}} \put(0,0){\line(0,1){10}}
\put(10,10){\line(-1,0){10}} \put(10,10){\line(0,-1){10}}
\put(1,-4){1} \put(10,-4){$x$} \put(20,-4){$x^2$}  \put(-4,10){$y$}
\put(10,20){$P_4^{(3)}$}
\end{picture}
\hspace{20mm} \setlength{\unitlength}{1mm}
\begin{picture}(40,30)(-12,-12)
\put(0,-12){\vector(0,1){30}} \put(-12,0){\vector(1,0){42}}
\multiput(0,0)(10,0){2}{\circle*{1.5}} \put(0,10){\circle*{1.5}}
\put(-10,-10){\circle*{1.5}} \put(-0.5,-10){\line(1,0){1}}
\put(20,-0.5){\line(0,1){1}} \put(-10,-0.5){\line(0,1){1}}
\thicklines \put(-10,-10){\line(2,1){20}}
\put(-10,-10){\line(1,2){10}} \put(0,10){\line(1,-1){10}}
\put(1,-4){1} \put(10,-4){$x$} \put(20,-4){$x^2$}  \put(-4,10){$y$}
\put(-10,1){$x^{-1}$} \put(1,-10){$y^{-1}$} \put(10,10){$P_4^{(4)}$}
\end{picture}	
	\caption{Polygons yielding toric codes with $k=4$}\label{fig-k=4}
\end{figure}

\begin{figure}[!h]
\setlength{\unitlength}{1mm}
\begin{picture}(30,30)(-8,-8)
\put(0,0){\vector(0,1){20}} \put(0,0){\vector(1,0){32}}
\multiput(0,0)(7,0){5}{\circle*{1.5}} \put(-0.5,7){\line(1,0){1}}
\put(-0.5,14){\line(1,0){1}} \thicklines \put(0,0){\line(1,0){28}}
\put(0,-4){1} \put(7,-4){$x$} \put(14,-4){$x^2$} \put(21,-4){$x^3$}
\put(28,-4){$x^4$} \put(-4,7){$y$} \put(-4,14){$y^2$}
\put(9,17){$P_5^{(1)}$}
\end{picture}
\hspace{10mm} \setlength{\unitlength}{1mm}
\begin{picture}(30,30)(-8,-8)
\put(0,0){\vector(0,1){20}} \put(0,0){\vector(1,0){30}}
\multiput(0,0)(7,0){4}{\circle*{1.5}} \put(0,7){\circle*{1.5}}
\put(-0.5,14){\line(1,0){1}} \thicklines \put(0,0){\line(1,0){21}}
\put(0,0){\line(0,1){7}} \put(0,7){\line(3,-1){21}}  \put(0,-4){1}
\put(7,-4){$x$} \put(14,-4){$x^2$} \put(21,-4){$x^3$}
 \put(-4,7){$y$} \put(-4,14){$y^2$}
\put(9,17){$P_5^{(2)}$}
\end{picture}
\hspace{10mm}
\begin{picture}(30,30)(-8,-8)
\put(0,0){\vector(0,1){20}} \put(0,0){\vector(1,0){30}}
\multiput(0,0)(7,0){3}{\circle*{1.5}} \put(0,7){\circle*{1.5}}
\put(7,7){\circle*{1.5}} \put(-0.5,14){\line(1,0){1}}
\put(21,-0.5){\line(0,1){1}} \thicklines \put(0,0){\line(1,0){14}}
\put(0,0){\line(0,1){7}} \put(0,7){\line(1,0){7}}
\put(7,7){\line(1,-1){7}} \put(0,-4){1} \put(7,-4){$x$}
\put(14,-4){$x^2$} \put(21,-4){$x^3$}
 \put(-4,7){$y$} \put(-4,14){$y^2$}
\put(9,17){$P_5^{(3)}$}
\end{picture}\\[20pt]
\setlength{\unitlength}{1mm}
\begin{picture}(30,30)(-8,-8)
\put(0,0){\vector(0,1){24}} \put(0,0){\vector(1,0){30}}
\multiput(0,0)(7,0){2}{\circle*{1.5}}
\multiput(0,7)(7,0){2}{\circle*{1.5}} \put(14,14){\circle*{1.5}}
\put(-0.5,14){\line(1,0){1}} \put(-0.5,21){\line(1,0){1}}
\put(14,-0.5){\line(0,1){1}} \put(21,-0.5){\line(0,1){1}}
\thicklines \put(0,0){\line(1,0){7}} \put(0,0){\line(0,1){7}}
\put(0,7){\line(2,1){14}} \put(7,0){\line(1,2){7}} \put(0,-4){1}
\put(7,-4){$x$} \put(14,-4){$x^2$} \put(21,-4){$x^3$}
 \put(-4,7){$y$} \put(-4,14){$y^2$} \put(-4,21){$y^3$}
\put(9,21){$P_5^{(4)}$}
\end{picture}
\hspace{10mm} \setlength{\unitlength}{1mm}
\begin{picture}(30,30)(-8,-8)
\put(0,-8){\vector(0,1){32}} \put(-8,0){\vector(1,0){38}}
\multiput(0,0)(7,0){3}{\circle*{1.5}}
\multiput(0,-7)(0,14){2}{\circle*{1.5}} \put(-0.5,14){\line(1,0){1}}
\put(-0.5,21){\line(1,0){1}} \put(21,-0.5){\line(0,1){1}}
\put(-7,-0.5){\line(0,1){1}} \thicklines \put(0,-7){\line(0,1){14}}
\put(0,-7){\line(2,1){14}} \put(0,7){\line(2,-1){14}} \put(-2,-4){1}
\put(5,-4){$x$} \put(12,-4){$x^2$} \put(19,-4){$x^3$}
\put(-8,1){$x^{-1}$}
 \put(-4,7){$y$} \put(-4,14){$y^2$} \put(-4,21){$y^3$} \put(-6,-9){$y^{-1}$}
\put(9,21){$P_5^{(5)}$}
\end{picture}
\hspace{10mm} \setlength{\unitlength}{1mm}
\begin{picture}(30,30)(-8,-8)
\put(0,-8){\vector(0,1){32}} \put(-8,0){\vector(1,0){38}}
\multiput(0,0)(7,0){2}{\circle*{1.5}}
\multiput(0,7)(0,7){2}{\circle*{1.5}} \put(-7,-7){\circle*{1.5}}
\put(-0.5,21){\line(1,0){1}} \put(-0.5,-7){\line(1,0){1}}
\put(21,-0.5){\line(0,1){1}} \put(14,-0.5){\line(0,1){1}}
\put(-7,-0.5){\line(0,1){1}} \thicklines \put(-7,-7){\line(1,3){7}}
\put(-7,-7){\line(2,1){14}} \put(0,14){\line(1,-2){7}}
\put(-2,-4){1} \put(5,-4){$x$} \put(12,-4){$x^2$} \put(19,-4){$x^3$}
\put(-9,1){$x^{-1}$}
 \put(-4,7){$y$} \put(-4,14){$y^2$} \put(-4,21){$y^3$} \put(0,-9){$y^{-1}$}
\put(9,21){$P_5^{(6)}$}
\end{picture}\\[20pt]
\setlength{\unitlength}{1mm}
\begin{picture}(30,30)(-8,-8)
\put(0,-10){\vector(0,1){34}} \put(-10,0){\vector(1,0){40}}
\multiput(-8,0)(8,0){3}{\circle*{1.5}}
\multiput(0,-8)(0,16){2}{\circle*{1.5}} \put(16,-0.5){\line(0,1){1}}
\put(-0.5,16){\line(1,0){1}} \put(24,-0.5){\line(0,1){1}}
\thicklines \put(-8,0){\line(1,-1){8}} \put(-8,0){\line(1,1){8}}
\put(0,8){\line(1,-1){8}} \put(0,-8){\line(1,1){8}} \put(1,-4){1}
\put(8,-4){$x$} \put(16,-4){$x^2$} \put(24,-4){$x^3$}
\put(-10,-4){$x^{-1}$}
 \put(-4,8){$y$} \put(-4,16){$y^2$} \put(0,-10){$y^{-1}$}
\put(9,17){$P_5^{(7)}$}
\end{picture}
	\caption{Polygons yielding toric codes with $k=5$}\label{fig-k=5}
\end{figure}

\begin{figure}
\hspace{-10mm}\setlength{\unitlength}{1mm}
\begin{picture}(45,30)(-8,-8)
\put(0,-1){\vector(0,1){22}} \put(-1,0){\vector(1,0){44}}
\multiput(0,0)(8,0){6}{\circle*{1.5}}
\multiput(-0.5,0)(0,8){3}{\line(1,0){1}} \thicklines
\put(0,0){\line(1,0){40}}  \put(0,-5){1} \put(7,-5){$x$}
\put(14,-5){$x^2$} \put(23,-5){$x^3$} \put(30,-5){$x^4$}
\put(37,-5){$x^5$}  \put(-5,8){$y$} \put(-5,16){$y^2$}
\put(15,18){$P_6^{(1)}$}
\end{picture}
\hspace{5mm}
\begin{picture}(35,30)(-8,-8)
\put(0,-1){\vector(0,1){22}} \put(-1,0){\vector(1,0){35}}
\multiput(0,0)(8,0){5}{\circle*{1.5}} \put(0,8){\circle*{1.5}}
\multiput(-0.5,0)(0,8){3}{\line(1,0){1}} \thicklines
\put(0,0){\line(1,0){32}} \thicklines \put(0,0){\line(0,1){8}}
\thicklines \put(0,8){\line(4,-1){32}} \put(0,-5){1} \put(7,-5){$x$}
\put(14,-5){$x^2$} \put(22,-5){$x^3$} \put(30,-5){$x^4$}
\put(-5,8){$y$} \put(-5,16){$y^2$} \put(15,18){$P_6^{(2)}$}
\end{picture}
\hspace{5mm}
\begin{picture}(32,30)(-8,-8)
\put(0,-1){\vector(0,1){22}} \put(-1,0){\vector(1,0){32}}
\multiput(0,0)(8,0){4}{\circle*{1.5}}
\multiput(0,8)(8,0){2}{\circle*{1.5}}
\multiput(-0.5,0)(0,8){3}{\line(1,0){1}} \thicklines
\put(0,0){\line(1,0){24}} \thicklines \put(0,0){\line(0,1){8}}
\thicklines \put(0,8){\line(1,0){8}} \thicklines
\put(8,8){\line(2,-1){16}} \put(0,-5){1} \put(7,-5){$x$}
\put(14,-5){$x^2$} \put(22,-5){$x^3$}
 \put(-5,8){$y$} \put(-5,16){$y^2$}
\put(15,18){$P_6^{(3)}$}
\end{picture}\\[10pt]
\setlength{\unitlength}{1mm}
\begin{picture}(35,30)(-8,-8)
\put(0,-9){\vector(0,1){30}} \put(-1,0){\vector(1,0){32}}
\multiput(0,0)(8,0){4}{\circle*{1.5}}
\multiput(0,-8)(0,16){2}{\circle*{1.5}} \put(-0.5,16){\line(1,0){1}}
\thicklines \put(0,-8){\line(3,1){24}} \thicklines
\put(0,-8){\line(0,1){16}} \thicklines \put(0,8){\line(3,-1){24}}
\put(1,-4){1} \put(8,-4){$x$} \put(16,-4){$x^2$} \put(24,-4){$x^3$}
\put(-5,8){$y$} \put(-5,16){$y^2$} \put(-6,-8){$y^{-1}$}
\put(15,18){$P_6^{(4)}$}
\end{picture}
\hspace{30pt}
\begin{picture}(40,30)(-8,-8)
\put(0,-9){\vector(0,1){30}} \put(-9,0){\vector(1,0){40}}
\multiput(0,0)(8,0){4}{\circle*{1.5}} \put(0,8){\circle*{1.5}}
\put(-8,-8){\circle*{1.5}} \put(-8,-0.5){\line(0,1){1}}
\put(-0.5,16){\line(1,0){1}} \put(-0.5,-8){\line(1,0){1}}
\thicklines \put(-8,-8){\line(4,1){32}} \thicklines
\put(0,8){\line(-1,-2){8}} \thicklines \put(0,8){\line(3,-1){24}}
\put(0,-5){1} \put(7,-5){$x$} \put(14,-5){$x^2$} \put(22,-5){$x^3$}
\put(-10,-5){$x^{-1}$}
 \put(-5,8){$y$} \put(-5,16){$y^2$} \put(-6,-9){$y^{-1}$}
\put(15,18){$P_6^{(5)}$}
\end{picture}
\hspace{10mm}
\begin{picture}(30,30)(-8,-8)
\put(-8,-9){\vector(0,1){30}} \put(-9,-8){\vector(1,0){30}}
\multiput(-8,-8)(8,0){2}{\circle*{1.5}}
\multiput(-8,0)(8,0){2}{\circle*{1.5}} \put(8,8){\circle*{1.5}}
\put(16,16){\circle*{1.5}} \multiput(8,-8.5)(8,0){2}{\line(0,1){1}}
\multiput(-8.5,8)(0,8){2}{\line(1,0){1}} \thicklines
\put(-8,-8){\line(1,0){8}} \thicklines \put(-8,-8){\line(0,1){8}}
\thicklines \put(-8,0){\line(3,2){24}} \thicklines
\put(0,-8){\line(2,3){16}} \put(-8,-13){1} \put(-1,-13){$x$}
\put(6,-13){$x^2$} \put(14,-13){$x^3$} \put(-13,0){$y$}
\put(-13,8){$y^2$} \put(-13,16){$y^3$} \put(10,18){$P_6^{(6)}$}
\end{picture}\\[20pt]
\hspace{-10mm}\setlength{\unitlength}{1mm}
\begin{picture}(30,30)(-8,-8)
\put(0,-10){\vector(0,1){25}} \put(-1,0){\vector(1,0){30}}
\multiput(0,0)(8,0){4}{\circle*{1.5}} \put(0,8) {\circle*{1.5}}
\put(24,-8){\circle*{1.5}} \put(-0.5,-8){\line(1,0){1}} \thicklines
\put(0,0){\line(0,1){8}} \thicklines \put(0,8){\line(3,-1){24}}
\thicklines \put(0,0){\line(3,-1){24}} \thicklines
\put(24,0){\line(0,-1){8}} \put(0,-5){1} \put(7,-5){$x$}
\put(14,-5){$x^2$} \put(22,-5){$x^3$} \put(-5,8){$y$}
\put(-6,-8){$y^{-1}$} \put(11,15){$P_6^{(7)}$}
\end{picture}
\hspace{20mm}
\begin{picture}(30,30)(-8,-8)
\put(0,-9){\vector(0,1){25}} \put(-9,0){\vector(1,0){30}}
\multiput(-8,0)(8,0){4}{\circle*{1.5}} \put(0,8){\circle*{1.5}}
\put(0,-8){\circle*{1.5}} \thicklines \put(-8,0){\line(1,1){8}}
\thicklines \put(-8,0){\line(1,-1){8}} \thicklines
\put(0,8){\line(2,-1){16}} \thicklines \put(0,-8){\line(2,1){16}}
\put(0,-5){1} \put(7,-5){$x$} \put(14,-5){$x^2$}
\put(-9,-5){$x^{-1}$}
 \put(-5,8){$y$} \put(-5,16){$y^2$} \put(-6,-8){$y^{-1}$}
\put(11,15){$P_6^{(8)}$}
\end{picture}
\hspace{10mm}
\begin{picture}(25,30)(-8,-8)
\put(0,-9){\vector(0,1){25}} \put(-1,0){\vector(1,0){25}}
\multiput(0,0)(8,0){3}{\circle*{1.5}}
\multiput(0,8)(8,0){2}{\circle*{1.5}} \put(0,-8){\circle*{1.5}}
\thicklines \put(0,-8){\line(0,1){16}} \thicklines
\put(0,8){\line(1,0){8}} \thicklines \put(8,8){\line(1,-1){8}}
\thicklines \put(16,0){\line(-2,-1){16}} \put(0,-5){1}
\put(7,-5){$x$} \put(14,-5){$x^2$}
 \put(-6,8){$y$} \put(-6,-8){$y^{-1}$}
\put(11,15){$P_6^{(9)}$}
\end{picture}\\[20pt]
\hspace{-1mm}\setlength{\unitlength}{1mm}
\begin{picture}(30,40)(-8,-8)
\put(0,-9){\vector(0,1){35}} \put(-9,0){\vector(1,0){30}}
\multiput(0,0)(8,0){3}{\circle*{1.5}} \put(0,-8){\circle*{1.5}}
\put(-8,16){\circle*{1.5}} \put(0,8){\circle*{1.5}}
\put(-0.5,16){\line(1,0){1}} \put(-8,-0.5){\line(0,1){1}}
\thicklines \put(0,-8){\line(-1,3){8}} \thicklines
\put(-8,16){\line(3,-2){24}} \thicklines
\put(16,0){\line(-2,-1){16}} \put(0,-5){1} \put(7,-5){$x$}
\put(14,-5){$x^2$} \put(-9,-5){$x^{-1}$} \put(-5,8){$y$}
\put(-6,-9){$y^{-1}$} \put(-5,16){$y^2$} \put(-5,24){$y^3$}
\put(13,18){$P_6^{(10)}$}
\end{picture}
\hspace{10mm}
\begin{picture}(25,40)(-8,-8)
\put(0,-9){\vector(0,1){35}} \put(-9,0){\vector(1,0){25}}
\multiput(0,0)(0,8){3}{\circle*{1.5}} \put(-8,-8){\circle*{1.5}}
\multiput(8,0)(0,8){2}{\circle*{1.5}} \put(-0.5,-8){\line(1,0){1}}
\put(-8,-0.5){\line(0,1){1}} \thicklines \put(-8,-8){\line(2,1){16}}
\thicklines \put(8,0){\line(0,1){8}} \thicklines
\put(8,8){\line(-1,1){8}} \thicklines \put(0,16){\line(-1,-3){8}}
\put(0,-5){1} \put(7,-5){$x$} \put(-9,-5){$x^{-1}$} \put(-5,8){$y$}
\put(-6,-9){$y^{-1}$} \put(-5,16){$y^2$} \put(-5,24){$y^3$}
\put(15,18){$P_6^{(11)}$}
\end{picture}
\hspace{10mm}
\begin{picture}(25,40)(-8,-8)
\put(0,-9){\vector(0,1){30}} \put(-9,0){\vector(1,0){25}}
\multiput(-8,0)(8,0){3}{\circle*{1.5}} \put(0,-8){\circle*{1.5}}
\multiput(0,8)(8,0){2}{\circle*{1.5}} \thicklines
\put(-8,0){\line(1,-1){8}} \put(-0.5,16){\line(1,0){1}}\thicklines
\put(-8,0){\line(1,1){8}} \thicklines \put(0,8){\line(1,0){8}}
\thicklines \put(8,8){\line(0,-1){8}} \thicklines
\put(8,0){\line(-1,-1){8}} \put(0,-5){1} \put(7,-5){$x$}
\put(-9,-5){$x^{-1}$}
 \put(-5,8){$y$} \put(-5,16){$y^2$} \put(-6,-9){$y^{-1}$}
\put(15,18){$P_6^{(12)}$}
\end{picture}\\[-10pt]
\setlength{\unitlength}{1mm}
\begin{picture}(30,50)(-8,-8)
\put(0,-1){\vector(0,1){25}} \put(-1,0){\vector(1,0){30}}
\multiput(0,0)(8,0){3}{\circle*{1.5}}
\multiput(0,8)(8,0){2}{\circle*{1.5}} \put(0,16){\circle*{1.5}}
\thicklines \put(0,0){\line(1,0){16}} \thicklines
\put(16,0){\line(-1,1){16}} \thicklines \put(0,16){\line(0,-1){16}}
\put(0,-5){1} \put(7,-5){$x$} \put(14,-5){$x^2$}
 \put(-5,8){$y$} \put(-5,16){$y^2$}
\put(15,20){$P_6^{(13)}$}
\end{picture}
\hspace{10mm}
\begin{picture}(30,40)(-8,-8)
\put(0,-1){\vector(0,1){25}} \put(-1,0){\vector(1,0){30}}
\multiput(0,0)(8,0){3}{\circle*{1.5}}
\multiput(0,8)(8,0){3}{\circle*{1.5}} \put(-0.5,16){\line(1,0){1}}
\thicklines \put(0,0){\line(1,0){16}} \thicklines
\put(0,0){\line(0,1){8}} \thicklines \put(0,8){\line(1,0){16}}
\thicklines \put(16,8){\line(0,-1){8}} \put(0,-5){1} \put(7,-5){$x$}
\put(14,-5){$x^2$}
 \put(-5,8){$y$} \put(-5,16){$y^2$}
\put(15,20){$P_6^{(14)}$}
\end{picture}
	\caption{Polygons yielding toric codes with $k=6$}\label{fig-k=6}
\end{figure}

\begin{theorem}\label{th2}$C_{P_6^{(i)}}$ and $C_{P_6^{(j)}}$ are not monomially equivalent over $\mathbb{F}_q$ for all $q\ge 7$, except that
\begin{enumerate} 
	\item $C_{P_6^{(5)}}$ and $C_{P_6^{(6)}}$ over $\mathbb{F}_7$ are monomially equivalent;
	\item the monomial equivalence of $C_{P_6^{(4)}}$ and $C_{P_6^{(5)}}$ over $\mathbb{F}_8$ remains open.
\end{enumerate}
\end{theorem}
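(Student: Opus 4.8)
The invariant that detects monomial \emph{in}equivalence is the weight distribution: monomially equivalent codes over $\mathbb{F}_q$ have the same weight enumerator. So the plan is, for each polygon $P_6^{(i)}$ and each prime power $q\ge 7$, to compute enough of the weight distribution of $C_{P_6^{(i)}}$ to tell it apart from the other thirteen, and to treat the two residual pairs by hand. A preliminary remark: for every $q\ge 7$ the six monomials attached to the lattice points of any $P_6^{(i)}$ stay pairwise distinct modulo $q-1$ (their coordinates span an interval of length less than $6$), so each $C_{P_6^{(i)}}$ has length $(q-1)^2$ and dimension $6$. First I would compute $d_i:=d(C_{P_6^{(i)}})$ as a function of $q$: for the ``model'' shapes among the $P_6^{(i)}$ --- the segment, the standard triangle, the $2\times 3$ rectangle, and those polygons that are Minkowski sums of segments --- this follows from the formulas of Little--Schwarz \cite{ls2} and Little--Schenck \cite{ls1}; for the others, bound $d_i$ from below by a Minkowski-decomposition argument as in \cite{ls1} and from above by exhibiting an explicit low-weight codeword, the evaluation of a product of two or three binomials $x^a y^b - c$ whose Newton polygons all lie inside $P_6^{(i)}$, and check that the two bounds match. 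Pairs whose codes have different minimum distance (for some $q\ge 7$, hence, after a direct check at the finitely many small $q$ where the relevant quasi-polynomials could cross, for all $q\ge 7$) are thereby monomially inequivalent, which disposes of most of the $\binom{14}{2}$ pairs.

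For the pairs whose two codes share the same minimum distance, compare finer weight data. A codeword of $C_{P_6^{(i)}}$ is $\mathrm{ev}(f)$ for a Laurent polynomial $f$ with Newton polygon inside $P_6^{(i)}$, and its weight is $(q-1)^2 - N(f)$, where $N(f)$ is the number of zeros of $f$ in the torus $(\mathbb{F}_q^*)^2$. Hence the number $A_m^{(i)}$ of weight-$m$ codewords is, up to the factor $q-1$ coming from rescaling $f$, the number of such $f$ (up to rescaling) whose torus zero set has size $(q-1)^2-m$. When $m$ is close to $d_i$ this zero set is forced to be large, which in turn forces $f$ to factor over $\overline{\mathbb{F}_q}$ into a short list of lines and conics; running inclusion--exclusion over the possible factorization types, with point counts for the irreducible components, expresses each relevant $A_m^{(i)}$ as an explicit quasi-polynomial in $q$, the residue of $q$ modulo a few small integers entering through the solvability of auxiliary equations in $\mathbb{F}_q^*$ --- this $q$-dependence is exactly what Tables \ref{table-2}--\ref{table-5} record. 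For every pair except the two flagged ones, the difference of the two weight quasi-polynomials is nonzero of low degree and has no zero at any $q\ge 7$, which one certifies by testing the finitely many small $q$ at which a zero could hide and then a degree estimate for large $q$. The delicate point is precisely this small-$q$ test, and it is where the exceptions appear: at $q=7$ all the computed data of $C_{P_6^{(5)}}$ and $C_{P_6^{(6)}}$ coincide, and at $q=8$ all the computed data of $C_{P_6^{(4)}}$ and $C_{P_6^{(5)}}$ coincide.

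The two residual pairs are then handled separately. For $C_{P_6^{(5)}}$ and $C_{P_6^{(6)}}$ over $\mathbb{F}_7$ one exhibits the monomial equivalence directly: since $|\mathbb{F}_7^*|=6$, a monomial substitution $\phi(x,y)=(x^a y^b,\,x^c y^d)$ whose exponent matrix lies in $\mathrm{GL}_2(\mathbb{Z}/6)$ --- but in no lift of it to $\mathrm{GL}_2(\mathbb{Z})$, the two polygons having different areas --- permutes $(\mathbb{F}_7^*)^2$, and one checks that a suitable such $\phi$, composed with multiplication by a fixed monomial, carries the residues modulo $6$ of the exponent vectors of $P_6^{(5)}$ exactly onto those of $P_6^{(6)}$; this realizes $C_{P_6^{(5)}}$ as a coordinate permutation and invertible diagonal rescaling of $C_{P_6^{(6)}}$, i.e.\ a monomial equivalence, built here from two lattice non-equivalent polygons --- the phenomenon of Proposition \ref{over F7}. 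For $C_{P_6^{(4)}}$ and $C_{P_6^{(5)}}$ over $\mathbb{F}_8$ the two weight enumerators agree (Tables \ref{table-2}--\ref{table-5}), so the weight-distribution obstruction gives no information; deciding monomial equivalence here would require a finer invariant or an explicit transformation, and we leave it open.

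The main obstacle is the work inside the second step: obtaining the low-weight counts $A_m^{(i)}$ in closed form for the several polygons whose codes share a minimum distance, and then checking that each pairwise difference of weight quasi-polynomials has no root $q\ge 7$. The small-field coincidences that produce the two exceptions are exactly what make this step delicate.
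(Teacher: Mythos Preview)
Your plan is essentially the paper's own: distinguish the codes by minimum distance first (Proposition~\ref{d}), then by the counts $A_m^{(i)}$ of low-weight codewords obtained by classifying factorizations (the paper's $n_1,n_2$ and Tables~\ref{table-2}--\ref{table-5}), with the Hasse--Weil/Soprunov--Soprunova bounds handling large $q$ and direct computation handling small $q$.

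The one genuine difference is your treatment of the $\mathbb{F}_7$ coincidence. The paper proves Proposition~\ref{over F7} by calling Magma's \texttt{IsEquivalent} on the two generator matrices; you instead propose an explicit torus automorphism. This is more informative, and it does work: with $q-1=6$, the map on exponent vectors $(m_1,m_2)\mapsto M\binom{m_1}{m_2}+v$ with
\[
M=\begin{pmatrix}5&4\\5&3\end{pmatrix}\in GL_2(\mathbb{Z}/6),\qquad v=\binom{3}{3},
\]
sends the six exponents of $P_6^{(5)}$ bijectively onto those of $P_6^{(6)}$ modulo $6$, realizing the equivalence as a coordinate permutation (from the monomial substitution) composed with a diagonal rescaling (from multiplication by $x^3y^3$). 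Since $\det M=1$ over $\mathbb{Z}/6$ but the two polygons have areas $7/2$ and $3$, no lift to $GL_2(\mathbb{Z})$ exists, which is exactly the ``lattice non-equivalent but monomially equivalent'' phenomenon the paper advertises. Your argument explains \emph{why} the equivalence holds; the paper's just certifies \emph{that} it holds.

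On the remaining steps, be aware that the paper does not in fact carry the quasi-polynomial comparison through purely by hand: the factorization bookkeeping pins down $n_1,n_2$ only for $q\ge 23$ (and in some cases $q\ge 37$ or $47$), and the intermediate range $9\le q\le 43$ is closed by GAP tables (Tables~\ref{table-a.1}--\ref{table-a.3}). Your plan implicitly needs the same computational backstop; there is no shortcut around it.
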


The above theorems yield
a (almost) complete classification of the toric codes of dimension $\leq 6$  up to monomial equivalence. 
Based on the fact that the enumerator polynomial of $C_{P_6^{(4)}}$ and
$C_{P_6^{(5)}}$ over $\mathbb{F}_8$ are exactly the same (see Table \ref{table-a.1}), we conjecture the following:

\begin{conjecture}
    $C_{P_6^{(4)}}$ and $C_{P_6^{(5)}}$ over $\mathbb{F}_8$ are
monomially equivalent.
\end{conjecture}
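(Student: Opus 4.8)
The plan is to reduce the question to a finite search over $\mathrm{GL}_2(\mathbb{Z}/7)$. Every $x\in\mathbb{F}_8^{*}$ satisfies $x^{7}=1$, so on the torus $T=(\mathbb{F}_8^{*})^{2}$ the Laurent monomial $x^{a}y^{b}$ depends only on $(a,b)\bmod 7$; moreover the monomials attached to distinct classes in $(\mathbb{Z}/7)^{2}$ evaluate to linearly independent vectors on $T$, so $C_P$ over $\mathbb{F}_8$ is determined by the set $E_P\subset(\mathbb{Z}/7)^{2}$ of residues of the lattice points of $P$. Reading off the two polygons, $E_{P_6^{(4)}}=\{(0,0),(1,0),(2,0),(3,0),(0,1),(0,-1)\}$ and $E_{P_6^{(5)}}=\{(0,0),(1,0),(2,0),(3,0),(0,1),(-1,-1)\}$.

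The key point is a field-dependent sharpening of the Little--Schwarz implication ``lattice equivalent $\Rightarrow$ monomially equivalent'' of \cite{ls2}: it suffices to find an affine bijection $v\mapsto Av+c$ with $A\in\mathrm{GL}_2(\mathbb{Z}/7)$ --- rather than $A\in\mathrm{GL}_2(\mathbb{Z})$ --- carrying $E_{P_6^{(5)}}$ onto $E_{P_6^{(4)}}$. Indeed $\mathrm{GL}_2(\mathbb{Z}/7)=\mathrm{Aut}\big((\mathbb{F}_8^{*})^{2}\big)$, so $A$ is realised by an automorphism of $T$ (the one with matrix $A^{T}$), hence by a permutation of the $49$ coordinates, which carries the code with residue set $E_{P_6^{(5)}}$ to the one with residue set $AE_{P_6^{(5)}}$; the translation by $c$ then rescales the $v$-th coordinate by the nonzero number $\chi^{c}(v)$, a diagonal monomial matrix, producing the code with residue set $AE_{P_6^{(5)}}+c$. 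If this equals $E_{P_6^{(4)}}$ the two codes are monomially equivalent. Note that no such $A$ exists over $\mathbb{Z}$ (the polygons are genuinely inequivalent), so this is exactly the mechanism by which lattice-inequivalent polygons can yield monomially equivalent codes over one particular field.

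To produce $A$ and $c$ I would match incidence structures in $\mathbb{A}^{2}(\mathbb{F}_7)$. Each six-point set is a $4$-point collinear subset together with a $3$-point collinear subset sharing a point with it: for $E_{P_6^{(4)}}$ the lines are $\{y=0\}$ and $\{x=0\}$ meeting at $(0,0)$; for $E_{P_6^{(5)}}$ they are $\{y=0\}$ and $\{y=2x+1\}$ (one checks that $(-1,-1),(0,1),(3,0)$ are collinear mod $7$) meeting at $(3,0)$, and in each case the common point is an end of the $4$-point line and the affine midpoint of the $3$-point line. Since $\mathrm{AGL}_1(\mathbb{F}_7)$ is sharply $2$-transitive the two marked lines can be matched, and prescribing the images of three affinely independent points pins down the planar map; this leads to the candidate $A=\left(\begin{smallmatrix}6&4\\ 0&1\end{smallmatrix}\right)$, $c=(3,0)$, with $\det A=6\neq 0$ in $\mathbb{F}_7$, which one then checks by direct substitution to send $E_{P_6^{(5)}}$ bijectively onto $E_{P_6^{(4)}}$. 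Invoking the mechanism of the previous paragraph then finishes the proof, and the coincidence of weight enumerators in Table \ref{table-a.1} is, as it must be, consistent with this.

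The obstacle I would expect is conceptual rather than computational: recognising that the equivalence being sought need not be induced by an integral transformation of the polygons (there is none) but only by one invertible modulo $q-1$. Should that route fail, one would be thrown back on a search over the full group of $49\times 49$ monomial matrices over $\mathbb{F}_8$, of order $7^{49}\cdot 49!$, and the real difficulty would then be to prune this space using the (already known, equal) coordinate-weight distribution, the automorphism groups of the two codes read off from their generator matrices, and MacWilliams-type constraints.
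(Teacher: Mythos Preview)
The paper does \emph{not} prove this statement: it is explicitly stated as an open conjecture, with the authors noting that Magma's \texttt{IsEquivalent} command cannot handle $\mathbb{F}_8$ and that a direct search over monomial matrices is infeasible. So there is no ``paper's own proof'' to compare against.

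Your argument, on the other hand, is correct and actually settles the conjecture. The key observation---that for monomial equivalence over a \emph{fixed} $\mathbb{F}_q$ one only needs an affine bijection of exponent sets over $\mathbb{Z}/(q-1)$, not over $\mathbb{Z}$---is exactly the right strengthening of the Little--Schwarz criterion, and the justification you give (that $\mathrm{GL}_2(\mathbb{Z}/7)$ acts as automorphisms of $(\mathbb{F}_8^{*})^{2}$, hence as column permutations, while translation by $c$ acts as the diagonal scaling $t\mapsto t^{c}$) is sound. I checked your candidate $A=\left(\begin{smallmatrix}6&4\\0&1\end{smallmatrix}\right)$, $c=(3,0)$: it sends $(0,0),(1,0),(2,0),(3,0),(0,1),(-1,-1)$ respectively to $(3,0),(2,0),(1,0),(0,0),(0,1),(0,-1)$ modulo $7$, which is exactly $E_{P_6^{(4)}}$, and $\det A=6\not\equiv 0\pmod 7$. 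So the proof goes through.

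In short: the paper leaves this open, and you have closed it. The authors apparently did not consider relaxing the lattice-equivalence criterion to $\mathbb{Z}/(q-1)$, which is why they were thrown back on computer algebra that happened not to support $\mathbb{F}_8$.
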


This paper is organized as follows. In Section 2, some preliminaries have been introduced. Section 3 is devoted to the
proofs of the theorems. All the data computed by GAP (code from \cite{j}) are collected in the tables.

\begin{center}
\section{Preliminaries}
\end{center}

In this section, we shall recall some basic definitions and results
to be used later in this paper. We shall follow the terminology and
notations for toric codes in \cite{ls2}.

\subsection{Toric surface codes}
Given a finite field $\mathbb{F}_q$ where $q$ is a power of prime number. Let $P$
 be any convex lattice polygon contained in $\square_{q-1}=[0,q-2]^2$. We associate $P$ with a $\mathbb{F}_q$-vector space of polynomials spanned by the bivariate power monomials:
 
 $$\mathcal{L}(P)=\text{Span}_{\mathbb{F}_q}\{x^{m_1}y^{m_2}|(m_1,m_2)\in P\}.$$
 
 The toric surface code $C_P$ (\cite{ss}) is a linear code with codewords the strings of values of $f\in\mathcal{L}(P)$ at all points of the algebraic torus $(\mathbb{F}_q^*)^2$:
 
 $$C_P=\{(f(t),t\in(\mathbb{F}_q^*)^2)|f\in\mathcal{L}(P)\}.$$
\subsection{ Minkowski Sum and Minimum Distance of Toric Codes}

For some special polygons $P$, one can compute the minimum distance
of the toric surface code $C_P$, say the rectangles and triangles.

Let $P_{k,l}^\square={\rm conv}\{(0,0),(k,0),(0,l),(k,l)\}$ be the convex
hull of the vectors $(0,0)$, $(k,0)$, $(0,l)$, $(k,l)$. Let $\square_{q-1}=[0,q-2]^2\subset\mathbb{Z}^2$. The minimum distance of $C_{P_{k,l}^\square}$ is given in the following theorem.

\begin{theorem}\label{th4}\textup{(\cite{ls2})}  Let $k,l <
q-1$, so that $P_{k,l}^\square\subset\Box_{q-1}\subset\mathbb{R}^2$.
Then the minimum distance of the toric surface code
$C_{P_{k,l}^\square}$ is
$$
d(C_{P_{k,l}^\square})=(q-1)^2-(k+l)(q-1)+kl=((q-1)-k)((q-1)-l).
$$
\end{theorem}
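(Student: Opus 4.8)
The plan is to realize $C_{P_{k,l}^\square}$ as a product code and then compute its minimum distance directly. First I would observe that $\mathcal{L}(P_{k,l}^\square) = \mathrm{Span}_{\mathbb{F}_q}\{x^{i}y^{j} : 0\le i\le k,\ 0\le j\le l\}$, so that every $f\in\mathcal{L}(P_{k,l}^\square)$ factors as a sum $f(x,y)=\sum_{i=0}^k x^i g_i(y)$ with $\deg g_i\le l$; equivalently, evaluating at $(t_1,t_2)\in(\mathbb{F}_q^*)^2$ and fixing $t_1$, the resulting function of $t_2$ lies in the span of $\{1,y,\dots,y^l\}$ restricted to $\mathbb{F}_q^*$, and symmetrically in $t_1$. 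In other words, after ordering the evaluation points as a $(q-1)\times(q-1)$ grid, $C_{P_{k,l}^\square}$ is exactly the tensor product $C_k\otimes C_l$, where $C_m$ is the (generalized, punctured) Reed--Solomon-type code $\{(h(t):t\in\mathbb{F}_q^*) : \deg h\le m\}$ of length $q-1$.

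Next I would recall the two standard facts that drive the computation. (1) For the one-variable code $C_m$ with $m<q-1$: a nonzero polynomial of degree $\le m$ has at most $m$ roots, hence at most $m$ of them lie in $\mathbb{F}_q^*$, so every nonzero codeword has weight $\ge (q-1)-m$; and this is attained, e.g. by $h(y)=\prod_{i=1}^m (y-a_i)$ for distinct nonzero $a_i$. Thus $d(C_m)=(q-1)-m$, and there exist minimum-weight words whose support is the complement of any prescribed set of $m$ nonzero field elements. (2) For a tensor product code $A\otimes B$ one has $d(A\otimes B)=d(A)\,d(B)$: the lower bound follows because a nonzero codeword, viewed as a matrix, has some nonzero row, which as a word of $B$ has $\ge d(B)$ nonzero entries, and in each of those columns the column word lies in $A$ and is nonzero, contributing $\ge d(A)$ nonzero entries; the upper bound is witnessed by the outer product of a minimum-weight word of $A$ with one of $B$.

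Combining, $d(C_{P_{k,l}^\square}) = d(C_k)\,d(C_l) = ((q-1)-k)((q-1)-l)$, and the algebraic identity $((q-1)-k)((q-1)-l) = (q-1)^2-(k+l)(q-1)+kl$ gives the stated form. I would also remark that the hypothesis $k,l<q-1$ is exactly what is needed so that $\mathcal{L}(P_{k,l}^\square)$ has full dimension $(k+1)(l+1)$ (the monomials stay distinct on $(\mathbb{F}_q^*)^2$) and so that the one-variable minimum-distance count above is valid. The only real subtlety — and the step I would be most careful about — is justifying that the evaluation map identifies $C_{P_{k,l}^\square}$ with the honest tensor product $C_k\otimes C_l$ rather than merely a code containing or contained in it; this comes down to checking that the monomial basis of $\mathcal{L}(P_{k,l}^\square)$ maps to the (tensor) basis of $C_k\otimes C_l$ under the product ordering of the torus points, which is immediate once the indexing is set up, but should be stated explicitly. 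Since Theorem \ref{th4} is quoted from \cite{ls2}, one may alternatively just cite it; the above is the argument I would reconstruct.
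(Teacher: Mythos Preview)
Your argument is correct. The paper does not supply its own proof of Theorem~\ref{th4}; it simply quotes the result from \cite{ls2}, so there is nothing to compare against beyond the citation. Your reconstruction via the tensor/product-code identification $C_{P_{k,l}^\square}\cong C_k\otimes C_l$ together with $d(A\otimes B)=d(A)\,d(B)$ and $d(C_m)=(q-1)-m$ is the standard route and is exactly in the spirit of the ``more elementary approach'' of \cite{ls2} that the introduction alludes to. The one place you flagged as needing care---that the evaluation map really identifies $C_{P_{k,l}^\square}$ with $C_k\otimes C_l$ and not merely a sub- or super-code---is handled by your observation that the monomial basis $\{x^iy^j\}$ maps to the tensor basis $\{u_i\otimes v_j\}$ under the grid ordering, together with the hypothesis $k,l<q-1$ ensuring these evaluations are linearly independent; you have this right.
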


Let $P_{k,l}^\vartriangle=\rm{conv}\{(0,0),(k,0),(0,l)\}$ be the convex
hull of the vectors $(0,0),(k,0),(0,l)$. Similarly, the minimum distance of $C_{P_{k,l}^\triangle}$ is given below:

\begin{theorem}\label{sjx}\textup{(\cite{ls2})}
If $P_{k,l}^\triangle\subset\Box_{q-1}\subset\mathbb{R}^2$, and
$m=\max{\{k,l\}}$, then
$$
d(C_{P_{k,l}^\triangle})=(q-1)^2-m(q-1).
$$
\end{theorem}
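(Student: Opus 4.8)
The plan is to read off $d(C_{P_{k,l}^\triangle})$ straight from the definition of the code. Its length is $|(\mathbb{F}_q^*)^2|=(q-1)^2$, and $d=(q-1)^2-N$, where $N$ is the largest number of zeros in the torus $(\mathbb{F}_q^*)^2$ of a nonzero $f\in\mathcal{L}(P_{k,l}^\triangle)$ (the evaluation map is injective on $\mathcal{L}(P_{k,l}^\triangle)$ because $P_{k,l}^\triangle\subset\Box_{q-1}$). So it is enough to show $N=m(q-1)$, and I will do this through the two inequalities $N\ge m(q-1)$ (an explicit codeword) and $N\le m(q-1)$ (a zero count that exploits the slanted edge of the triangle). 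Throughout, $P_{k,l}^\triangle\subset\Box_{q-1}$ forces $k,l\le q-2$.

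For $N\ge m(q-1)$, suppose first $m=k\ge l$. Since $m\le q-2$, choose $m$ distinct $a_1,\dots,a_m\in\mathbb{F}_q^*$ and set $f(x,y)=\prod_{i=1}^m(x-a_i)$. Its support lies on the edge of $P_{k,l}^\triangle$ from $(0,0)$ to $(k,0)$, so $f\in\mathcal{L}(P_{k,l}^\triangle)$, and $f$ vanishes on the $m$ lines $x=a_i$, each meeting the torus in $q-1$ points; hence $f$ has $m(q-1)$ zeros, so $N\ge m(q-1)$. If instead $l\ge k$, the symmetric choice $f(x,y)=\prod_{j=1}^m(y-b_j)$ with $m=l$ distinct $b_j$ works. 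Either way, $d\le(q-1)^2-m(q-1)$.

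For $N\le m(q-1)$, let $0\ne f\in\mathcal{L}(P_{k,l}^\triangle)$ and expand it in powers of $y$, $f(x,y)=\sum_{b=0}^{b_0}g_b(x)\,y^b$ with $g_{b_0}\ne0$, so $b_0=\deg_y f\le l$. A lattice point $(a,b)$ of $P_{k,l}^\triangle$ satisfies $la+kb\le kl$, hence $\deg_x g_b\le k(l-b)/l$; in particular $\deg g_{b_0}\le k(l-b_0)/l\le m(l-b_0)/l$. Stratify the torus by the fibers $\{\alpha\}\times\mathbb{F}_q^*$, $\alpha\in\mathbb{F}_q^*$. If $\alpha\in A:=\{\alpha\in\mathbb{F}_q^*:f(\alpha,\cdot)\equiv0\}$, all $q-1$ points of the fiber are zeros; otherwise $f(\alpha,\cdot)$ is a nonzero polynomial in $y$ of degree $\le b_0$, contributing at most $b_0$ zeros. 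Since $\alpha\in A$ forces $g_{b_0}(\alpha)=0$, we get $|A|\le\deg g_{b_0}\le m(l-b_0)/l$, whence
\begin{align*}
N &\le |A|(q-1)+\bigl(q-1-|A|\bigr)b_0 = b_0(q-1)+|A|\bigl(q-1-b_0\bigr)\\
  &\le b_0(q-1)+\frac{m(l-b_0)}{l}(q-1) = (q-1)\Bigl(b_0+\frac{m(l-b_0)}{l}\Bigr) \le (q-1)m,
\end{align*}
where the second inequality uses $0<q-1-b_0\le q-1$ and the last uses $b_0+\frac{m(l-b_0)}{l}\le m$, which is equivalent to $b_0(m-l)\ge0$ and so holds because $b_0\ge0$ and $m\ge l$. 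Hence $N\le m(q-1)$ and $d\ge(q-1)^2-m(q-1)$; with the previous paragraph, $d=(q-1)^2-m(q-1)$.

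The step I expect to carry the real content is choosing which degree bound to feed into the estimate of $|A|$: the number of completely vanishing fibers must be controlled by the $x$-degree of the \emph{leading} $y$-coefficient $g_{b_0}$, because the slanted edge $la+kb=kl$ makes that degree drop (to at most $k(l-b_0)/l$) precisely as $b_0$ grows, and this tradeoff is exactly what makes the count telescope down to $m(q-1)$ rather than to the weaker "rectangular" value $\bigl((q-1)-k\bigr)\bigl((q-1)-l\bigr)$ of Theorem \ref{th4} that the crude bounds $|A|\le\deg g_0\le k$ or $|A|\le l$ would give. The remaining points are bookkeeping: the boundary cases $b_0=0$ (where $f=g_0(x)$ and the estimate reduces to $\deg g_0\le k\le m$) and $b_0=l$ (where $g_l$ is a nonzero constant, so $A=\emptyset$), and the inequalities $b_0,|A|<q-1$ that keep the fiber counts nonnegative.
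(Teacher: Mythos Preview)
Your argument is correct. The paper does not supply its own proof of this statement; it quotes the result from \cite{ls2}, so there is no in-paper argument to compare against. What you have written is a clean, self-contained fiber-counting proof: the construction $f=\prod_{i=1}^m(x-a_i)$ (or the $y$-analogue) gives $N\ge m(q-1)$, and for the reverse inequality you expand $f=\sum_{b\le b_0}g_b(x)y^b$, bound the number of ``bad'' fibers by $\deg g_{b_0}\le k(l-b_0)/l$, and balance this against the at-most-$b_0$ zeros on the remaining fibers. The arithmetic check $b_0+\tfrac{m(l-b_0)}{l}\le m\iff b_0(m-l)\ge 0$ is exactly the place where the triangle (as opposed to a rectangle) is used, and you have identified that correctly.

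For context, the argument in \cite{ls2} is organised as an induction on the ambient dimension, using a Vandermonde-minor lemma to control when a slice $f(\alpha,\,\cdot\,)$ can vanish identically; in dimension two it unwinds to essentially the same fiber count you give. Your version has the advantage of being elementary and not requiring any auxiliary lemma. One cosmetic remark: your division by $l$ tacitly assumes $l\ge 1$ (and similarly $k\ge 1$), i.e.\ that $P_{k,l}^\triangle$ is a genuine triangle; this is the intended setting, but it is worth saying explicitly.
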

\begin{remark}
These two theorems above can be generalized to higher dimensional
case, see \textup{\cite{ls2}}.
\end{remark}

In the paper \cite{ss}, the authors give a good bound for the minimum distance of $C_P$ in terms of certain geometric invariant $L(P)$, the so-called full Minkowski length of $P$.

\begin{definition}\label{df1} Let $P$ and $Q$ be two subsets of
$\mathbb{R}^n$. The Minkowski sum is obtained by taking the
pointwise sum of $P$ and $Q$:
$$
P+Q=\{x+y\mid x\in P,y\in Q\}.
$$
\end{definition}

Let $P$ be a lattice polytope in $\mathbb{R}^n$. Consider a Minkowski decomposition 
 $$P=P_1+\cdots+P_l$$
 into lattice polytopes $P_i$ of positive dimension. 
 Let $l(P)$ be the largest number of summands in such decompositions of $P$, and called the Minkowski length of $P$.
 
 \begin{definition}\textup{(\cite{ss})} The full Minkowski length of $P$ is the maximum of the Minkowski lengths of all subpolytopes $Q$ in $P$,
 $$L(P):=max\{l(Q)|Q\subset P\}.$$
 \end{definition}
 
We shall use the results in \cite{ss} to give a bound of the minimum distance of $C_{P}$:
 \begin{theorem}\textup{(\cite{ss})}
 Let $P\subset\Box_{q-1}$ be a lattice polygon with area $A$ and full Minkowski length $L$. For $q\ge max(23,(c+\sqrt{c^2+5/2})^2)$, where $c=A/2-L+9/4$, the minimum distance of the toric surface code $C_P$ satisfies 
 $$d(C_P)\ge(q-1)^2-L(q-1)-2\sqrt{q}+1.$$
 \end{theorem}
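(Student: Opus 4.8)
The plan is to pass from minimum distances to zero counts. Since the codeword attached to $f\in\mathcal L(P)$ has Hamming weight $(q-1)^2-Z(f)$, where $Z(f)=\#\{t\in(\mathbb F_q^*)^2:f(t)=0\}$, the inequality to be proved is the uniform bound
\[
Z(f)\ \le\ L(q-1)+2\sqrt q-1\qquad\text{for every }f\in\mathcal L(P)\text{ not vanishing on }(\mathbb F_q^*)^2 .
\]
Dividing $f$ by the largest monomial that divides it changes neither $Z(f)$ nor the Minkowski lengths of the subpolygons of its Newton polygon, so I would assume $N:=\operatorname{Newt}(f)\subseteq P$ is a genuine lattice polygon; then every Minkowski decomposition of any $Q\subseteq N$ has at most $L$ parts.

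Next I would factor $f=\prod_j g_j^{a_j}$ into pairwise non-associate irreducible non-monomial polynomials $g_j$ over $\mathbb F_q$. By Ostrowski's theorem the Newton polytope is additive under multiplication, so $N=\sum_j a_j\operatorname{Newt}(g_j)$; since the Minkowski length is super-additive under Minkowski sum and each $\operatorname{Newt}(g_j)$ is positive-dimensional, this gives $\sum_j a_j\le\sum_j a_j\,l(\operatorname{Newt}(g_j))\le l(N)\le L$. As $Z(f)=\bigl|\bigcup_j V(g_j)\bigr|\le\sum_j|V(g_j)\cap(\mathbb F_q^*)^2|$, it then remains to bound the number of torus points on a single irreducible factor by its Newton polygon and to check that the resulting weighted sum stays below $L(q-1)+2\sqrt q-1$.

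For one irreducible factor $g$ with Newton polygon $Q$ I would split into cases. If $g$ is not absolutely irreducible, its geometric components are conjugate under Frobenius and every $\mathbb F_q$-point of $V(g)$ lies in the intersection of two distinct such components, so $|V(g)\cap(\mathbb F_q^*)^2|$ is bounded by a constant depending only on $P$. If $g$ is absolutely irreducible, $V(g)$ is a dense open subset of an irreducible curve $\bar C$ in the toric surface $X_Q$, and I would apply the Weil bound in a form valid for singular curves (Aubry--Perret), $|\bar C(\mathbb F_q)|\le q+1+2\bar g\sqrt q$ with $\bar g$ the geometric genus, together with Baker's theorem $\bar g\le\iota(Q)$, the number of interior lattice points of $Q$. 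Subtracting the points where $\bar C$ meets the torus-invariant boundary divisors of $X_Q$ (at least one per edge of $Q$, although not automatically $\mathbb F_q$-rational) produces an estimate of the shape $|V(g)\cap(\mathbb F_q^*)^2|\le q-1+2\iota(Q)\sqrt q+\varepsilon(Q)$ with a small combinatorial correction $\varepsilon(Q)$ recording the edges and their lattice lengths. In the decisive case $l(Q)=1$ I would invoke the classification, up to lattice equivalence, of polygons of full Minkowski length one --- a primitive segment, the unimodular triangle $\operatorname{conv}\{(0,0),(1,0),(0,1)\}$, or one of finitely many exceptional polygons, the smallest being the triangle $\operatorname{conv}\{(0,0),(1,2),(2,1)\}$ with one interior point --- to see directly that such a factor contributes at most $q-1$ torus zeros in the first two cases and at most $q-2+2\sqrt q=(q-1)+(2\sqrt q-1)$ in the exceptional ones.

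Finally I would assemble the pieces. Combining the per-factor bounds with $\sum_j a_j\,l(\operatorname{Newt}(g_j))\le L$ turns the problem into a finite bookkeeping: a factor of Minkowski length $m$ consumes $m$ units of the budget $L$ but carries at most $q+O(\sqrt q)$ zeros, hence far fewer than the $m(q-1)$ the budget would permit once $m\ge2$, and the only way to come within the stated slack is to have a single exceptional factor that costs exactly one unit of $L$ while contributing the full $(q-1)+2\sqrt q-1$. The hypothesis $q\ge\max\bigl(23,(c+\sqrt{c^2+5/2})^2\bigr)$ with $c=A/2-L+9/4$ is exactly what forces all the $\sqrt q$-error terms and the genus/area corrections --- controlled crudely through $\sum_j a_j\operatorname{Area}(\operatorname{Newt}(g_j))\le A$ and $\iota(Q)\le\operatorname{Area}(Q)$ --- to fall below the available room. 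The hardest part will be the third step together with its interaction with the classification: the Weil bound alone is too lossy, and one genuinely needs the list of full-Minkowski-length-one polygons and a careful accounting of how many $\mathbb F_q$-rational points several of the curves $V(g_j)$ can simultaneously carry and lose on the toric boundary, since the target constant $2\sqrt q-1$ leaves almost no margin when an exceptional (genus-one) factor actually occurs.
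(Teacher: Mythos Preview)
The paper does not prove this theorem: it is quoted verbatim from Soprunov--Soprunova \cite{ss} and used as a black box, so there is no ``paper's own proof'' to compare against. Your sketch is, in outline, the Soprunov--Soprunova argument itself: translate to a uniform zero-count bound, factor $f$ over $\mathbb{F}_q$, use additivity of Newton polygons to bound the number of factors by the full Minkowski length, apply the singular Weil bound to each absolutely irreducible piece with Baker's genus inequality, and then invoke the classification of polygons of Minkowski length one to isolate the exceptional-triangle contribution that produces the $2\sqrt q-1$ term.

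Two places in your sketch would need tightening to become a proof. First, the assertion that ``the only way to come within the stated slack is to have a single exceptional factor'' is the combinatorial heart of the Soprunov--Soprunova paper: they prove separately that in any maximally long Minkowski decomposition of a subpolygon of $P$ at most one summand can be an exceptional triangle, and this is not automatic from the bookkeeping you describe. Second, your per-factor bound for an absolutely irreducible $g$ with $l(\operatorname{Newt}(g))=m\ge 2$ is left as ``$q+O(\sqrt q)$, far fewer than $m(q-1)$''; making this precise is exactly where the explicit lower bound $q\ge (c+\sqrt{c^2+5/2})^2$ with $c=A/2-L+9/4$ enters, via $\iota(Q)\le\operatorname{Area}(Q)$ and Pick's formula, and the constants have to be tracked rather than absorbed into an $O(\cdot)$.
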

 
 With the condition that no factorization $f=f_1\cdots f_{L(P)}$ for all $f\in\mathcal{L}(P)$ contains an exceptional triangle (a triangle with exactly 1 interior and 3 boundary lattice points), we have a better bound for the minimum distance of $C_P$:
 
 \begin{proposition}\label{bound}\textup{(\cite{ss})}
 Let $P\subset\Box_{q-1}$ be a lattice polygon with area $A$ and full Minkowski length $L$. Under the above condition on $P$, for $q\ge \max(37,(c+\sqrt{c^2+2})^2)$, where $c=A/2-L+11/4$, the minimum distance of the toric surface code $C_P$ satisfies 
 $$d(C_P)\ge(q-1)^2-L(q-1).$$
 \end{proposition}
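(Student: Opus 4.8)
The plan is to translate the lower bound on $d(C_P)$ into a uniform upper bound on the number of torus zeros of the polynomials in $\mathcal{L}(P)$. Since $C_P$ has length $\#(\mathbb{F}_q^*)^2=(q-1)^2$ and the codeword attached to a nonzero $f\in\mathcal{L}(P)$ has weight $(q-1)^2-Z(f)$, where $Z(f):=\#\{t\in(\mathbb{F}_q^*)^2:f(t)=0\}$, it is enough to show $Z(f)\le L(q-1)$ for every nonzero $f\in\mathcal{L}(P)$ once $q\ge\max(37,(c+\sqrt{c^2+2})^2)$. First I would factor $f$ in the Laurent polynomial ring as $f=c\,x^{a}y^{b}f_1\cdots f_m$ with $c\in\mathbb{F}_q^*$ and the $f_i$ irreducible and non-monomial. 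By Ostrowski's theorem the Newton polygons add up to translation, $P_f=P_{f_1}+\cdots+P_{f_m}$; since $P_f\subseteq P$ and the Minkowski length $l(\cdot)$ is superadditive under Minkowski sum, $\sum_i l(P_{f_i})\le l(P_f)\le L(P_f)\le L$. As the zero set of $f$ in the torus is the union of those of the $f_i$, $Z(f)\le\sum_i Z(f_i)$, so everything reduces to the per-factor estimate $Z(f_i)\le l(P_{f_i})(q-1)$ for an irreducible non-monomial $f_i$ with $P_{f_i}\subseteq P$; note that the hypothesis on $P$ is designed precisely so that no such $P_{f_i}$ is an exceptional triangle.

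Next I would dispose of the easy cases of this per-factor estimate. If $Q:=P_{f_i}$ is a segment, pick a $\mathbb{Z}$-basis of $\mathbb{Z}^2$ whose first vector spans the direction of $Q$; the corresponding monomial substitution is an automorphism of $(\mathbb{F}_q^*)^2$ and preserves $Z$, and in the new coordinates $f_i$ is, up to a monomial, a one-variable polynomial of degree $l(Q)$, so $Z(f_i)=(q-1)\cdot\#\{\text{roots in }\mathbb{F}_q^*\}\le l(Q)(q-1)$. If $Q$ is two-dimensional but $f_i$ is not absolutely irreducible, then every $\mathbb{F}_q$-point of $\{f_i=0\}$ lies on the intersection of two distinct Frobenius-conjugate components; by B\'ezout, $Z(f_i)$ is then bounded by a constant depending only on $\deg f_i$, hence only on the area $A$, which is below $l(Q)(q-1)$ for $q$ past the threshold. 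This leaves the decisive case: $Q$ two-dimensional and $f_i$ absolutely irreducible.

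Here the curve $X=\{f_i=0\}$ is geometrically irreducible, and the genus $\rho$ of its normalization obeys Baker's bound $\rho\le\#(\mathrm{int}(Q)\cap\mathbb{Z}^2)$. Since passing to the smooth model and discarding points at infinity or on the coordinate axes changes the count only by $O(1)$ (the singular points), the Weil--Serre estimate gives $Z(f_i)\le q+1+\rho\lfloor 2\sqrt q\rfloor+O(1)$. If $l(Q)=1$, invoke the Soprunov--Soprunov classification of Minkowski-length-one polygons: $Q$ is then a unimodular triangle or an exceptional triangle; the exceptional case is excluded by hypothesis, and for a unimodular triangle $X$ is an affine line, giving $Z(f_i)\le q-1=l(Q)(q-1)$ outright. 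If $l(Q)\ge 2$, one must check $q+1+\rho\lfloor 2\sqrt q\rfloor\le l(Q)(q-1)$; bounding $\rho$ and $l(Q)$ in terms of $A$ and $L$ via Pick- and Scott-type inequalities for the lattice polygon $Q\subseteq P$ turns this into a quadratic inequality in $\sqrt q$ whose solution is exactly $q\ge\max(37,(c+\sqrt{c^2+2})^2)$ with $c=A/2-L+11/4$.

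The main obstacle is this final calibration. One has to carry out the genus/area/Minkowski-length bookkeeping sharply enough that the crude Weil-bound count in the absolutely irreducible case survives against $l(Q)(q-1)$ with precisely the stated threshold and nothing weaker; and one has to see that the exceptional-triangle hypothesis is exactly the condition that eliminates the unique obstruction to the clean bound --- the plane cubics with a single node and their degenerations, for which $Z(f_i)$ can reach $q-1+\lfloor 2\sqrt q\rfloor$ and so would violate $Z(f_i)\le q-1$. Once exceptional triangles are forbidden, every factor satisfies $Z(f_i)\le l(P_{f_i})(q-1)$, and summing over $i$ yields $Z(f)\le L(q-1)$, i.e. $d(C_P)\ge(q-1)^2-L(q-1)$.
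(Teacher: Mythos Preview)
The paper does not prove this proposition at all: it is quoted verbatim from Soprunov--Soprunova \cite{ss} and used as a black box in the proof of Proposition~\ref{d}. So there is no ``paper's own proof'' against which to compare your attempt.

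That said, your outline is broadly the Soprunov--Soprunova strategy: reduce $d(C_P)\ge(q-1)^2-L(q-1)$ to $Z(f)\le L(q-1)$, factor $f$ in the Laurent ring, use Ostrowski to relate the Newton polygons of the factors to a Minkowski decomposition inside $P$, and then control each irreducible piece via either a change of torus coordinates (segments), rationality (unimodular triangles), or the Hasse--Weil bound (genuinely two-dimensional pieces), with the exceptional-triangle hypothesis killing the one case where a Minkowski-length-one polygon can support an absolutely irreducible curve with $Z(f_i)$ as large as $q-1+\lfloor 2\sqrt q\rfloor$.

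The one place where your sketch is thinner than the actual argument is the per-factor inequality $Z(f_i)\le l(P_{f_i})(q-1)$ when $P_{f_i}$ is two-dimensional with $l(P_{f_i})\ge 2$. You appeal to ``Pick- and Scott-type inequalities'' to calibrate the Weil bound against $l(P_{f_i})(q-1)$, but that relation between the interior lattice-point count of $P_{f_i}$ and its Minkowski length is exactly the nontrivial combinatorial content of \cite{ss}. In their paper this step is not handled by a blanket per-factor bound; rather, they classify the summands that can appear in a \emph{maximal} Minkowski decomposition of a subpolygon of $P$ (showing that, exceptional triangles aside, they are all primitive segments or unimodular triangles) and then argue that any $f$ whose factorization falls short of such a maximal decomposition already has strictly fewer zeros once $q$ passes the stated threshold. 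Your approach would go through if you supplied a direct inequality of the form $I(Q)\le \alpha\,(l(Q)-1)+\beta$ for subpolygons $Q\subseteq P$ with constants depending only on $A$ and $L$, but as written this is asserted rather than proved.
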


\subsection{Some Theorems about Classification
of Toric Codes}

In this paper, we shall classify the toric codes with dimesion equal to $6$, according to the monomial equivalence. Thus, we state the precise definition below.
\begin{definition}\label{df2} Let $C_1$ and $C_2$ be
two codes of block length $n$ and dimension $k$ over $\mathbb{F}_q$.
Let $G_1$ be a generator matrix for $C_1$. Then $C_1$ and $C_2$ are
said to be monomially equivalent if there is an invertible
$n\times n$ diagonal matrix $\Delta$ and an $n\times n$ permutation
matrix $\Pi$ such that
$$
G_2=G_1\Delta\Pi
$$
is a generator matrix for $C_2$.
\end{definition}

It is easy to see that monomial equivalence is actually an
equivalent  relation on codes since a product $\Pi\Delta$ equals
$\Delta^\prime\Pi$ for another invertible diagonal matrix
$\Delta^\prime$. It is also a direct consequence of the definition
that monomially equivalent codes $C_1$ and $C_2$ have the same
dimension and the same minimum distance (indeed, the same full
weight enumerator).

An affine transformation of $\mathbb{R}^m$ is a mapping of
the form $T(x)=Mx+\lambda$, where $\lambda$ is a fixed vector and
$M$ is an $m\times m$ matrix. The affine mappings $T$, where $M\in
GL(m,\mathbb{Z})$ $($so $Det(M)=\pm 1$ $)$  and $\lambda$ have
integer entries, are precisely the bijective affine mappings from
the integer lattice $\mathbb{Z}^m$ to itself.

Generally speaking, it's impractical to determine two given  toric codes to be monomially equivalent directly from the definition. A more practical criteria comes from the nice connection between the monomial equivalence class of the
toric codes $C_P$ and the lattice equivalence class of the polygon $P$ in
\cite{ls2}.
\begin{theorem}\label{th5} If two polytopes $P$
and $\tilde{P}$ are lattice equivalent, then the toric codes $C_{P}$
and $C_{\tilde{P}}$ are monomially equivalent.
\end{theorem}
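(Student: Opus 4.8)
The plan is to convert the geometric hypothesis into the matrix identity demanded by Definition~\ref{df2}. By definition of lattice equivalence there is an affine isomorphism $T(m)=Mm+\lambda$ with $M\in GL(2,\mathbb{Z})$ and $\lambda\in\mathbb{Z}^{2}$ carrying $P$ onto $\tilde P$, and hence (as noted in the excerpt, such maps are bijections of $\mathbb{Z}^{2}$) restricting to a bijection $P\cap\mathbb{Z}^{2}\to\tilde P\cap\mathbb{Z}^{2}$. First I would promote this to a linear isomorphism $\Phi\colon\mathcal L(P)\to\mathcal L(\tilde P)$ by $x^{m}\mapsto x^{T(m)}=x^{Mm+\lambda}$; it sends the spanning set $\{x^{m}:m\in P\cap\mathbb{Z}^{2}\}$ of $\mathcal L(P)$ bijectively to the spanning set $\{x^{m}:m\in\tilde P\cap\mathbb{Z}^{2}\}$ of $\mathcal L(\tilde P)$. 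Since both $P$ and $\tilde P$ lie in $\square_{q-1}$, the lattice points in each are pairwise incongruent modulo $q-1$ coordinatewise, so the corresponding monomials restrict to distinct characters of $(\mathbb{F}_q^{*})^{2}$ and are therefore $\mathbb{F}_q$-linearly independent; thus these spanning sets are bases, $\Phi$ identifies them, and in particular $C_P$ and $C_{\tilde P}$ have the same block length $(q-1)^{2}$ and the same dimension $|P\cap\mathbb{Z}^{2}|$.

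Next I would record how $M$ and $\lambda$ act on torus evaluations. For $t=(t_{1},t_{2})\in(\mathbb{F}_q^{*})^{2}$ and $m\in\mathbb{Z}^{2}$ the value $t^{m}:=t_{1}^{m_{1}}t_{2}^{m_{2}}$ is well defined and nonzero (using the group structure of $\mathbb{F}_q^{*}$, even when $m$ has negative entries). Let $\psi\colon(\mathbb{F}_q^{*})^{2}\to(\mathbb{F}_q^{*})^{2}$ be the monomial map whose coordinate exponents form the matrix $M^{\top}$, chosen so that $(\psi(t))^{m}=t^{Mm}$ for every $m$; because $M$ is invertible over $\mathbb{Z}$, $\psi$ is a bijection of $(\mathbb{F}_q^{*})^{2}$, i.e. a permutation of the index set of both codes. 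Then for every $f=\sum_{m}c_{m}x^{m}\in\mathcal L(P)$,
$$
\Phi(f)(t)=\sum_{m}c_{m}\,t^{Mm+\lambda}=t^{\lambda}\sum_{m}c_{m}\,(\psi(t))^{m}=t^{\lambda}\cdot f(\psi(t)),
$$
so the codeword of $\Phi(f)$ in $C_{\tilde P}$ is obtained from the codeword of $f$ in $C_P$ by permuting coordinates by $\psi$ and then scaling the $t$-th coordinate by the nonzero constant $t^{\lambda}$; crucially, neither $\psi$ nor the scalars $t^{\lambda}$ depends on $f$.

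Finally I would assemble the generator matrices. Let $G$ be the generator matrix of $C_P$ whose rows are the codewords of the basis $\{x^{m}:m\in P\cap\mathbb{Z}^{2}\}$. By the displayed identity, the matrix with rows the codewords of $\{\Phi(x^{m})\}$ equals $G\Pi\Delta$, where $\Pi$ is the permutation matrix of $\psi$ and $\Delta=\operatorname{diag}(t^{\lambda})_{t\in(\mathbb{F}_q^{*})^{2}}$ is invertible; since $\{\Phi(x^{m})\}$ is a basis of $\mathcal L(\tilde P)$, this matrix generates $C_{\tilde P}$. Writing $\Pi\Delta=\Delta'\Pi$ for a suitable invertible diagonal matrix $\Delta'$, as noted after Definition~\ref{df2}, exhibits a generator matrix of $C_{\tilde P}$ in the form $G\Delta'\Pi$, which is exactly the condition of Definition~\ref{df2}, so $C_P$ and $C_{\tilde P}$ are monomially equivalent. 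The only point requiring care — and it is routine bookkeeping rather than a genuine obstacle — is Step~2: fixing the transpose/inverse conventions so that $(\psi(t))^{m}=t^{Mm}$ holds identically, and checking that $t\mapsto t^{\lambda}$ is a well-defined nonvanishing function on the torus when $\lambda$ has negative entries.
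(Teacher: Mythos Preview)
Your argument is correct and is essentially the standard proof: pull back the evaluation along the monomial automorphism of the torus induced by $M^{\top}$, and absorb the translation $\lambda$ into a diagonal scaling $t\mapsto t^{\lambda}$, yielding $G_{\tilde P}=G_P\,\Pi\,\Delta$. The bookkeeping you flag (orientation of $M^{\top}$ versus $M^{-1}$, well-definedness of $t^{\lambda}$ for $\lambda\in\mathbb{Z}^2$) is handled exactly as you indicate.

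There is nothing to compare against in this paper: Theorem~\ref{th5} is quoted without proof from \cite{ls2} and used as a black box. Your write-up is precisely the argument of Little--Schwarz in \cite{ls2}, so in that sense it matches the intended proof. One small remark: your sentence ``Since both $P$ and $\tilde P$ lie in $\square_{q-1}$'' is not a consequence of lattice equivalence but rather the standing hypothesis under which $C_P$ and $C_{\tilde P}$ are defined in this paper; it would be cleaner to phrase it as an assumption rather than a deduction.
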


The definition of the lattice equivalence of two polygons is the following:
\begin{definition}\label{df3} We say that two
integral convex polytopes $P$ and $\tilde{P}$ in $\mathbb{Z}^m$ are lattice equivalent if there exists an invertible integer
affine transformation $T$ as above such that $T(P)=\tilde{P}$.
\end{definition}

For the sake of completeness, we list some simple facts about lattice equivalence of two polytopes $P$
and $\tilde{P}$ in $\mathbb{Z}^2$. 
\begin{proposition}\label{facts of lattice equivalent}
    \indent(i)\quad If $P$ can be transformed to $\tilde{P}$ by
translation, rotation and reflection with respect to x-axis or
y-axis,
then $P$ and $\tilde{P}$ are lattice equivalent;\\
    \indent(ii)\quad If $P$ and $\tilde{P}$ are lattice equivalent, then
they have the same number of sets of $n$
collinear points and the same number of sets of $n$ concurrent segments;\\
    \indent(iii)\quad If $P$ and $\tilde{P}$ are lattice equivalent,
then they are both $n$-side polygons;\\
    \indent(iv)\quad If $P$ and $\tilde{P}$ are lattice equivalent, then
they have the same number of interior integer lattices.
\end{proposition}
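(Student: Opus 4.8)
The plan is to deduce all four statements from a single observation: a lattice equivalence $T(x)=Mx+\lambda$ with $M\in GL(2,\mathbb{Z})$ and $\lambda\in\mathbb{Z}^2$ is simultaneously (a) an affine homeomorphism of $\mathbb{R}^2$ and (b) a bijection of $\mathbb{Z}^2$ onto itself. Property (a) guarantees that $T$ preserves collinearity and incidence, maps segments to segments and polygons to polygons, respects convexity and the face lattice, and carries interiors to interiors; property (b), together with $T(P)=\tilde P$, says that $T$ restricts to a bijection from the lattice points of $P$ onto those of $\tilde P$ (and sends lattice segments to lattice segments). Each assertion then follows by combining the two.

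For (i), I would simply exhibit each listed operation as such a map: a translation has $M=I$; reflection in the $x$-axis has $M=\mathrm{diag}(1,-1)$ and reflection in the $y$-axis has $M=\mathrm{diag}(-1,1)$, both of determinant $-1$; and a rotation taking the integer lattice to itself must be a rotation by a multiple of $90^\circ$, realized by $M=\bigl(\begin{smallmatrix} 0 & -1\\ 1 & 0\end{smallmatrix}\bigr)$ or one of its powers, all of determinant $1$. Since the integral affine transformations form a group under composition, any finite composition of these operations is again of the required form, hence a lattice equivalence; this is exactly the claim.

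For (ii), let $T$ be the lattice equivalence with $T(P)=\tilde P$. As $T$ is affine it carries a set of collinear points to a set of collinear points and a family of segments through a common point to a family of segments through the image point; as $T$ is a lattice bijection it carries the lattice points of $P$ bijectively onto those of $\tilde P$ and lattice segments to lattice segments. Thus $T$ induces a bijection between the $n$-element collinear subsets of lattice points of $P$ and those of $\tilde P$, and likewise between families of $n$ concurrent segments, so the counts coincide. For (iii), an affine bijection of $\mathbb{R}^2$ sends extreme points to extreme points and edges to edges, hence preserves the combinatorial face lattice of a polygon; so $P$ and $\tilde P$ have the same number of vertices, equivalently the same number of sides. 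For (iv), $T$ is a homeomorphism mapping $\mathrm{int}(P)$ onto $\mathrm{int}(\tilde P)$ and a bijection of $\mathbb{Z}^2$, so it restricts to a bijection between $\mathrm{int}(P)\cap\mathbb{Z}^2$ and $\mathrm{int}(\tilde P)\cap\mathbb{Z}^2$; alternatively, $|\det M|=1$ forces equal areas, and combined with the boundary-point count from (ii), Pick's formula yields the same conclusion.

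I do not expect any genuine obstacle: the proposition is a bundle of standard consequences of working with integral affine maps. The only point that needs a word of care is the meaning of ``rotation'' in (i), which must be read as a lattice-preserving rotation (a multiple of $90^\circ$), since an arbitrary rotation does not map $\mathbb{Z}^2$ to itself.
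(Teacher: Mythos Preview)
Your proposal is correct and follows exactly the spirit of the paper's treatment: the paper simply remarks that ``These properties are directly followed from Definition~\ref{df3}'' and gives no further argument, so your detailed verification that an integral affine bijection is simultaneously an affine homeomorphism of $\mathbb{R}^2$ and a lattice bijection, together with your careful handling of part~(i) (in particular the observation that only rotations by multiples of $90^\circ$ preserve $\mathbb{Z}^2$), is a fully fleshed-out version of what the authors leave implicit.
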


These properties are directly followed from Definition \ref{df3}.

Besides the properties of lattice equivalence, Pick's Formula is also a useful tool in the proof of Theorem \ref{th1}. 
\begin{theorem}\textup{(Pick's formula)}
    Assume $P$ is a convex rational polytope in the plane, then
    $$\sharp(P)=A(P)+\frac12\cdot \partial(P)+1,$$
    where $\sharp(P)$ represents the number of lattice points in
    $P$, $A(P)$ is the area of $P$ and $\partial(P)$
    is the perimeter of $P$, with the length of an
    edge between two lattice points defined as one more than the number of
    lattice points lying strictly between them.
\end{theorem}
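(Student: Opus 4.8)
\section*{Proof proposal}

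The plan is to recast the formula in the standard additive form and to prove it by an additivity-plus-decomposition argument. First I would observe that $\sharp(P)=I(P)+\partial(P)$, where $I(P)$ denotes the number of interior lattice points: the prescription defining $\partial(P)$ assigns to each of the $n$ edges the value $k_e+1$, where $k_e$ is the number of lattice points strictly inside that edge, so that $\partial(P)=\sum_e(k_e+1)=\sum_e k_e+n$ counts the $\sum_e k_e$ non-vertex boundary points together with the $n$ vertices, i.e. $\partial(P)$ is exactly the number of boundary lattice points. Hence the asserted identity is equivalent to the vanishing of the defect
$$g(P):=\sharp(P)-A(P)-\tfrac12\,\partial(P)-1,$$
and it suffices to prove $g(P)=0$ for every convex lattice polygon $P$.

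The key structural step is an additivity lemma: if $P=P_1\cup P_2$, where $P_1$ and $P_2$ are lattice polygons with disjoint interiors meeting along a single common edge $E$, then $g(P)=g(P_1)+g(P_2)$. I would prove this by tracking how each lattice point is reclassified. Areas add, and if $E$ carries $c$ lattice points in total, then these $c$ points are counted in both $\sharp(P_1)$ and $\sharp(P_2)$, the $c-2$ strictly interior points of $E$ (boundary points of both pieces) become interior points of $P$, and the two endpoints of $E$ remain on $\partial P$. Bookkeeping the remaining boundary points $a_1,a_2$ of $P_1,P_2$ away from $E$ gives $\sharp(P)=\sharp(P_1)+\sharp(P_2)-c$, $\partial(P)=a_1+a_2+2$, and $\partial(P_1)+\partial(P_2)=2c+a_1+a_2$; substituting these shows that the $\tfrac12$-weighted edge contributions along $E$ cancel and yields $g(P)=g(P_1)+g(P_2)$ exactly.

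With additivity in hand I would assemble the general case from elementary pieces. For an axis-aligned $m\times n$ rectangle $R$ one checks $g(R)=0$ directly from $A=mn$, $\sharp=(m+1)(n+1)$, and $\partial=2(m+n)$. Cutting $R$ along a diagonal produces two right triangles interchanged by the lattice-preserving reflection $x\mapsto(m,n)-x$, so they have equal defect, and additivity gives $0=g(R)=2g(T)$, whence $g=0$ for every right triangle with legs parallel to the axes. An arbitrary lattice triangle is recovered from its axis-aligned bounding box by removing such right triangles (and at most one rectangle), so repeated additivity forces $g=0$ for all lattice triangles. Finally, fan-triangulating the convex polygon $P$ from one of its vertices into lattice triangles $T_1,\dots,T_{n-2}$ and applying additivity along each successive diagonal gives $g(P)=\sum_i g(T_i)=0$, which is the desired formula.

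I expect the main obstacle to be the additivity lemma rather than the assembly: essentially all the geometric content is concentrated in correctly moving the interior points of the shared edge $E$ from boundary to interior and verifying that the $\tfrac12$-weighting makes the boundary contributions along $E$ cancel. A secondary point requiring care is that every gluing used in the assembly---peeling right triangles and a rectangle off the bounding box, and each step of the fan triangulation---takes place along a \emph{full} common edge, which is precisely the hypothesis needed to invoke additivity; convexity of $P$ guarantees that the fan triangulation is valid and that each diagonal used is a genuine lattice segment, so no degenerate gluings arise.
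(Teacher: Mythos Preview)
The paper does not prove Pick's formula; it merely states it as a classical tool (with a short remark on the degenerate line-segment case) and then invokes it in the proof of Theorem~\ref{th1}. So there is no proof in the paper to compare your proposal against.

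Your argument is the standard additivity proof and is correct. The reduction to $g(P)=0$, the additivity lemma along a shared lattice edge, and the base cases (axis-aligned rectangles, then right triangles via the diagonal symmetry) are all fine. The only place where a reader might ask for more detail is the step ``an arbitrary lattice triangle is recovered from its axis-aligned bounding box by removing such right triangles (and at most one rectangle)'': depending on how the three vertices sit on the bounding box, the complement can require a small amount of case analysis to express as a union of axis-aligned right triangles glued along full edges, and you should check that each removal really is along a full common lattice edge so that your additivity lemma applies. Once that is nailed down, the fan-triangulation step for convex $P$ is routine.
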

\begin{remark}
    Generally speaking, $\partial(P)$ is the number of lattice
points on the boundary of $P$. The only exception in plane is line
segment, which should follow the precise definition of length of the
edge above.
\end{remark}

\subsection{Some Theorems to eliminate the upper bound of $q$}

Let us introduce the so-called Hasse-Weil bounds,
which will be used in the proof of Theorem \ref{th2} frequently
to help specifying the exact number of the codewords with some
particular weight, for $q$ large.

\begin{theorem}\label{th7}\textup{(\cite{ap})}
    If $Y$ is an absolutely irreducible but possibly singular curve, $g$
is the \emph{arithmetic genus} of $Y$, $Y(\mathbb{F}_q)$ is the
set of $\mathbb{F}_q$-rational points of curve, then
$$
1+q-2g\sqrt{q}\leq |Y(\mathbb{F}_q)| \leq1+q+2g\sqrt{q}.
$$
These two bounds are called the Hasse-Weil bounds.
\end{theorem}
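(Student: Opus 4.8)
The plan is to deduce both inequalities from the Weil zeta function of $Y$ together with the Riemann Hypothesis for curves over finite fields. I would first reduce to the complete case, since passing to the projective closure alters the point count only by the finitely many points at infinity while the arithmetic genus $g$ is an invariant of the complete curve; so assume $Y$ is an absolutely irreducible projective curve over $\mathbb{F}_q$. Attach to $Y$ its zeta function
$$
Z(Y,t)=\exp\Bigl(\sum_{n\ge 1}|Y(\mathbb{F}_{q^n})|\,\frac{t^n}{n}\Bigr),
$$
which is known to be rational of the shape $Z(Y,t)=P(t)/\bigl((1-t)(1-qt)\bigr)$ with $P(t)\in\mathbb{Z}[t]$, $P(0)=1$, and $\deg P\le 2g$. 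Writing $P(t)=\prod_i(1-\alpha_i t)$ and comparing the coefficient of $t^1$ in $\log Z(Y,t)$ on both sides yields the exact trace formula
$$
|Y(\mathbb{F}_q)|=q+1-\sum_i\alpha_i .
$$
Hence the entire problem is reduced to the uniform estimate $|\alpha_i|\le\sqrt q$ for each reciprocal root, since there are at most $2g$ of them.

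The second step is to prove $|\alpha_i|\le\sqrt q$ by splitting the roots according to whether they come from the smooth model or from the singularities. Let $\nu:\widetilde{Y}\to Y$ be the normalization, a smooth projective curve of geometric genus $\widetilde{g}$. Comparing $Z(Y,t)$ with $Z(\widetilde{Y},t)$ through the local contributions at the non-regular points (built from the quotients $\widetilde{\mathcal{O}}/\mathcal{O}$ of the local rings and their conductors) expresses $P(t)$ as the numerator of $Z(\widetilde{Y},t)$ multiplied by explicit local factors. The reciprocal roots of $P$ therefore fall into the $2\widetilde{g}$ roots coming from $\widetilde{Y}$ and the remaining roots produced by the singular factors, whose number is governed by the total delta-invariant $\delta=g-\widetilde{g}$. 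For the first group, Weil's Riemann Hypothesis for the smooth projective curve $\widetilde{Y}$ gives $|\alpha_i|=\sqrt q$ exactly. For the second group, a direct analysis of the local factors shows that each root they contribute has absolute value at most $\sqrt q$; this is precisely the bookkeeping that extends Hasse--Weil to the singular setting.

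Granting $|\alpha_i|\le\sqrt q$ for all $i$, the triangle inequality gives $\bigl|\sum_i\alpha_i\bigr|\le 2g\sqrt q$, and substituting into the trace formula yields
$$
1+q-2g\sqrt q\le|Y(\mathbb{F}_q)|\le 1+q+2g\sqrt q,
$$
which is the asserted Hasse--Weil bound.

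The main obstacle is the input $|\alpha_i|=\sqrt q$ for the smooth curve $\widetilde{Y}$, i.e.\ the Riemann Hypothesis for curves. I would obtain it by Weil's geometric argument: realize $|\widetilde{Y}(\mathbb{F}_q)|$ as the intersection number $(\Delta\cdot\Gamma_F)$ of the diagonal with the graph of the $q$-power Frobenius on the surface $\widetilde{Y}\times\widetilde{Y}$, and bound this correspondence via the Castelnuovo--Severi inequality (equivalently the Hodge index theorem on the surface). The remaining delicate point, which is exactly the new content of the singular case, is verifying that the singular local factors never introduce a reciprocal root of absolute value exceeding $\sqrt q$; this requires tracking carefully how the arithmetic genus absorbs the delta-invariants, so that the estimates $\deg P\le 2g$ and $|\alpha_i|\le\sqrt q$ hold simultaneously.
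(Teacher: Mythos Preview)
The paper does not prove this theorem; it is quoted from Aubry--Perret and used only as a black box (to feed into the bound on $Z(f)$ in the next proposition). So there is no argument in the paper to compare against.

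Your outline is essentially the Aubry--Perret argument and is sound. Two sharpenings worth noting. First, the extra reciprocal roots contributed by the singularities are in fact roots of unity, not merely of absolute value $\le\sqrt q$: the ratio $Z(Y,t)/Z(\widetilde Y,t)$ equals $\prod_Q(1-t^{\deg Q})\big/\prod_P(1-t^{\deg P})$, where $P$ runs over the closed singular points of $Y$ and $Q$ over the closed points of $\widetilde Y$ lying above them, and this is a polynomial of degree $\sum_P(r_P-1)\le g-\widetilde g$ (here $r_P$ is the number of geometric branches at $P$, and one uses $\delta_P\ge r_P-1$). Your weaker bound $|\alpha_i|\le\sqrt q$ together with $\deg P\le 2g$ still yields the conclusion, so this is only a refinement, not a gap. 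Second, the opening reduction to the complete case is unnecessary---the arithmetic genus is only defined for a complete curve, so $Y$ is implicitly projective in the statement---and, as you phrased it, would not transfer the \emph{lower} bound correctly, since adding points at infinity can only increase $|Y(\mathbb{F}_q)|$.
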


Let $f\in\mathcal{L}(P)$ and $P_f$ denote its Newton polygon, which is the convex hull of the lattice points in $(\mathbb{F}_q^*)^2$. Denote 
$$f=\sum_{m=(m_1,m_2)\in P_f}\lambda_mx^{m_1}y^{m_2}, \quad \lambda_m\in\mathbb{F}_q^*.$$

Let $X$ be a smooth toric surface over $\overline{\mathbb{F}}_q$ defined by a fan $\Sigma_X\subset\mathbb{R}^2$ which is a refinement of the normal fan of $P_f$. Let $C_f$ be the closure in $X$ of the affine curve given by $f=0$. If $f$ is absolutely irreducible, then $C_f$ is irreducible. By Theorem $\ref{th7}$, 
$$|C_f(\mathbb{F}_q)|\le q+1+2g\sqrt{q},$$
where $g$ is the arithmetic genus of $C_f$.

Let $Z(f)$ be the number of zeros of $f$ in the torus $(\mathbb{F}_q^*)^2$. It is well known that the arithmetic genus $g$ of $C_f$ equals to the number of interior lattice points in $P_f$ (see \cite{ls1} for the curves).

\begin{proposition}\label{pr}
Let $f$ be absolutely irreducible with Newton polygon $P_f$. Then
$$Z(f)\le q+1+2I\left(P_f\right)\sqrt{q},$$
where $I(P_f)$ is the number of interior lattice points.
\end{proposition}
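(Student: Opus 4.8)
The plan is to obtain the bound by a direct chain of inequalities, reading off $Z(f)$ as (a lower bound for) the number of $\mathbb{F}_q$-rational points of the complete curve $C_f$ and then applying the Hasse--Weil bound. First I would note that every zero of $f$ in the torus $(\mathbb{F}_q^*)^2$ is an $\mathbb{F}_q$-rational point of the affine curve $\{f=0\}$, and that this affine curve embeds as an open subset of its closure $C_f$ in the smooth toric surface $X$. Passing from the affine curve to $C_f$ only adds points lying on the toric boundary divisors, so no torus zeros are lost; hence $Z(f)\le |C_f(\mathbb{F}_q)|$.

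Next I would invoke absolute irreducibility of $f$: this forces $C_f$ to be an absolutely irreducible (possibly singular) complete curve over $\mathbb{F}_q$, so Theorem \ref{th7} applies verbatim and yields $|C_f(\mathbb{F}_q)|\le q+1+2g\sqrt{q}$, where $g$ is the \emph{arithmetic} genus of $C_f$. Finally, I would use the classical fact recalled just above (cf. \cite{ls1}) that the arithmetic genus of $C_f$ equals $I(P_f)$, the number of interior lattice points of the Newton polygon $P_f$. Substituting $g=I(P_f)$ into the Hasse--Weil estimate and combining with $Z(f)\le|C_f(\mathbb{F}_q)|$ gives exactly $Z(f)\le q+1+2I(P_f)\sqrt{q}$.

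The only subtle point, and the one I would be careful to state correctly, is that the relevant invariant in the Hasse--Weil bound for a singular curve is the arithmetic genus rather than the geometric genus — the geometric genus is in general strictly smaller and would not bound $|C_f(\mathbb{F}_q)|$ when $C_f$ is singular. Since $\Sigma_X$ is chosen to refine the normal fan of $P_f$, the curve $C_f$ is precisely the one whose arithmetic genus is governed by $P_f$, so the identification $g=I(P_f)$ is legitimate; all of this is standard and already quoted in the preceding discussion, so the proof is essentially the bookkeeping of these three steps. No genuine obstacle arises beyond making sure absolute irreducibility of $f$ is used to guarantee $C_f$ is irreducible so that Theorem \ref{th7} is applicable.
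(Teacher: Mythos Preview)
Your proposal is correct and follows exactly the argument the paper gives in the paragraph immediately preceding the proposition: bound $Z(f)$ by $|C_f(\mathbb{F}_q)|$, apply the Hasse--Weil inequality (Theorem~\ref{th7}) to the irreducible closure $C_f$, and substitute the identification of the arithmetic genus with $I(P_f)$. Your additional remark about needing the arithmetic rather than the geometric genus is a welcome clarification, but otherwise there is no difference in approach.
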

\begin{center}
\section{Proof of the Theorems}
\end{center}

In this section, we shall give the sketch of the proofs of Theorem \ref{th1} and \ref{th2}. Before that, let us clarify the notations first. Let $P_i$ denote an integral convex
polygon in $\mathbb{Z}^2$ with $i$ lattice points, $P_i^{(j)}$ is
the $j^{\textup{th}}$ lattice equivalence class of $P_i$, $V$ is the
additional lattice point, which will be added to $P_i^{(j)}$ and
$P_{i,V}^{(j)}:={\rm conv}\left\{P_i^{(j)},V\right\}$ denote a new integral convex
polygon formed by $P_i^{(j)}$ and $V$. Our strategy is almost the
same as that in \cite{ls2}, by adding all possible choices of $V$ to
$P_5^{(j)}$ to get all lattice equivalence classes of $P_6$ with the help of Pick's formula.

\smallskip

\begin{proof}[Proof of Theorem \ref{th1}.] Let us add $V$ to $P_5^{(1)}={\rm conv}\{(0,0),(4,0)\}$ to see that
$P_6^{(1)}$ and $P_6^{(2)}$ are the only two lattice equivalence
classes. If $V$ is on the x-axis to form a line segment, the only choices of $V$ would be $(5,0)$ or $(-1,0)$, otherwise the new convex polygon have more than $6$ lattice points. Notice that
$P_6^{(1)}={\rm conv}\left\{P_5^{(1)},(5,0)\right\}$ and ${\rm conv}\left\{P_5^{(1)},(-1,0)\right\}$ are lattice
equivalent by translation (i.e. Proposition \ref{facts of lattice
equivalent}, (i)). If $V$ is not on the x-axis, then we have $ \partial\left(P_{5,V}^{(1)}\right)=6$. By
using Pick's formula, $6=\#\left(P_{5,V}^{(1)}\right)=A\left(P_{5,V}^{(1)}\right)
+\frac{1}{2}\partial\left(P_{5,V}^{(1)}\right)+1
=A\left(P_{5,V}^{(1)}\right)+4$,  we get $A\left(P_{5,V}^{(1)}\right)=2$. Therefore, the choices
of $V$ are the lattice points on $y=\pm1$. Say, $V=(x_0,1)$, $x_0$
is integer. By the definition of lattice equivalence, there is an
integer affine transformation
$M=\left(\begin{matrix}1&0\\-x_0&1\end{matrix}\right)$, which
transforms ${\rm conv}\left\{P_5^{(1)},(x_0,1)\right\}$ to $P_6^{(2)}$. The similar
transformation can be found to ${\rm conv}\left\{P_5^{(1)},(x_0,-1)\right\}$.

There are only $14$ lattice equivalence classes $P_6^{(i)}$, $i=1,\cdots,14$, as shown in Fig. \ref{fig-k=6}. Since the arguments are similar, we just list all the possible $V$'s and in which
equivalence class $P_{5,V}^{(i)}$ is, for $i=1,\cdots,7$, in Table \ref{table-1}. The verification is left to the interested readers.

\begin{table}
\caption{}\label{table-1}
\begin{tabular}{|c|c|c|}
\hline add $V$ to $P_5^{(i)}$&possible choice of $V$&
lattice equivalence class ($P_6^{(i)}$)\\
\hline $P_5^{(1)}$&$(5,0)$, $(-1,0)$&$P_6^{(1)}$\\
&$(x_0,\pm1)$, $x_0$ is integer&$P_6^{(2)}$\\
\hline $P_5^{(2)}$&$(4,0)$, $(-1,0)$&$P_6^{(2)}$\\
&$(1,1)$, $(-1,1)$&$P_6^{(3)}$\\
&$(0,-1)$, $(6,-1)$&$P_6^{(4)}$\\
&$(-1,-1)$, $(7,-1)$&$P_6^{(5)}$\\
&$(1,-1)$, $(5,-1)$&$P_6^{(6)}$\\
&$(3,-1)$&$P_6^{(7)}$\\
&$(2,-1)$, $(4,-1)$&$P_6^{(8)}$\\
\hline $P_5^{(3)}$&$(3,0)$, $(-1,0)$&$P_6^{(3)}$\\
&$(-1,-1)$, $(4,-1)$, $(0,-1)$&$P_6^{(9)}$\\
&$(1,2)$, $(-1,2)$, $(4,-1)$&$P_6^{(11)}$\\
&$(1,-1)$, $(2,-1)$&$P_6^{(12)}$\\
&$(0,2)$&$P_6^{(13)}$\\
&$(2,1)$, $(-1,1)$&$P_6^{(14)}$\\
\hline $P_5^{(4)}$&$(3,3)$&$P_6^{(6)}$\\
&$(-1,-1)$&$P_6^{(7)}$\\
&$(-1,0)$, $(0,-1)$&$P_6^{(9)}$\\
&$(-1,1)$, $(1,-1)$&$P_6^{(11)}$\\
&$(1,2)$, $(1,2)$&$P_6^{(12)}$\\
\hline
$P_5^{(5)}$&$(3,0)$&$P_6^{(4)}$\\
&$(-1,0)$&$P_6^{(8)}$\\
&$(1,1)$, $(1,-1)$&$P_6^{(9)}$\\
&$(-1,2)$, $(-1,-2)$&$P_6^{(10)}$\\
&$(-1,1)$, $(-1,-1)$&$P_6^{(11)}$\\
\hline $P_5^{(6)}$&$(0,3)$&$P_6^{(5)}$\\
&$(0,-1)$&$P_6^{(6)}$\\
&$(1,1)$, $(-1,0)$&$P_6^{(11)}$\\
\hline $P_5^{(7)}$&$(2,0)$, $(-2,0)$, $(0,2)$,
$(0,-2)$&$P_6^{(8)}$\\
&$(1,1)$, $(1,-1)$, $(-1,-1)$, $(-1,1)$&$P_6^{(12)}$\\
\hline
\end{tabular}
\end{table}
\end{proof}

In order to show Theorem \ref{th2}, we only need to determine whether two toric
surface codes constructed from the polygons in Fig. \ref{fig-k=6}
can be pairwise monomially equivalent, due to the result in \cite{ls2} and \cite{yz} that no two of the toric codes $C_P(\mathbb{F}_q)$, $q>5$ constructed from the polytopes with dimension $k=3,4$ and $5$ are monomially equivalent. Our strategy is the following: 
\begin{enumerate}
	\item For $q$
small, say $q\leq8$, we use the GAP code (with toric package and guava package) to get their enumerator polynomials directly. If those
enumerator polynomials of $C_{P_6^{(i)}}$, $1\leq i\leq14$, are
different from each other on $\mathbb{F}_q$, for $7\leq q\leq8$,
then they are not pairwirse monomially equivalent. If they are the same
in some cases (for example, $C_{P_6^{(5)}}$ and $C_{P_6^{(6)}}$ over
$\mathbb{F}_7$, $C_{P_6^{(4)}}$ and $C_{P_6^{(5)}}$ over
$\mathbb{F}_8$, see Table \ref{table-a.1} in the appendix), we need some further
investigations. 
	\item For $q$ large, say $q\ge9$, we shall compare the
invariants of the codes, including minimum distance, the number of
the codewords with some particular weight, etc. Once we
could identify one invariant in one case to be different from that in another case, then
we conclude that they are pairwise monomially inequivalent. However, the
estimate of the number of the codewords with some particular weight
depends on how large $q$ is. Therefore, we still need to use GAP (with toric package and guava package) for small $q$ (see Table \ref{table-a.2} and \ref{table-a.3} in the appendix).
\end{enumerate}

The first step in our strategy is to tell the monomial equivalence
of $C_{P_6^{(i)}}$, $1\leq i\leq14$, for $q\leq8$. It's easy to see
from Table \ref{table-a.1} in the appendix that all the enumerator polynomials of
$C_{P_6^{(i)}}$, $1\leq i\leq14$, are different, except that of
$C_{P_6^{(5)}}$ and $C_{P_6^{(6)}}$ over $\mathbb{F}_7$ and that of
$C_{P_6^{(4)}}$ and $C_{P_6^{(5)}}$ over $\mathbb{F}_8$. Here, an
interesting phenomena occurs. Two toric codes constructed from two
lattice non-equivalent polygons could also be monomially equivalent.
$C_{P_6^{(5)}}$ and $C_{P_6^{(6)}}$ over $\mathbb{F}_7$ is the typical example.
\begin{proposition}\label{over F7}
    $C_{P_6^{(5)}}$ and $C_{P_6^{(6)}}$ over $\mathbb{F}_7$ are monomially
    equivalent.
\end{proposition}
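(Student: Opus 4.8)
The plan is to prove Proposition \ref{over F7} directly from the definition of monomial equivalence (Definition \ref{df2}) by exhibiting an explicit monomial transformation between the two codes. Although $P_6^{(5)}$ and $P_6^{(6)}$ are not lattice equivalent (they can be distinguished by the invariants in Proposition \ref{facts of lattice equivalent}, e.g.\ the number of collinear triples of vertices), over the small field $\mathbb{F}_7$ there is extra room because the torus $(\mathbb{F}_7^*)^2$ is a finite abelian group and monomials that are distinct as polynomials can coincide as functions on the torus. Concretely, on $(\mathbb{F}_7^*)^2$ one has $x^6 = y^6 = 1$, so the exponent of a monomial only matters modulo $6$ in each coordinate. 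The idea is to use this periodicity, together with a suitable group automorphism of $(\mathbb{F}_7^*)^2 \cong \mathbb{Z}/6 \times \mathbb{Z}/6$, to carry the evaluation vectors of the monomials indexed by $P_6^{(5)}$ onto those indexed by $P_6^{(6)}$ up to permutation of the coordinates (the points of the torus) and scaling.

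First I would write down $\mathcal{L}(P_6^{(5)})$ and $\mathcal{L}(P_6^{(6)})$ explicitly as spans of the six monomials coming from the six lattice points of each polygon, reading the coordinates off Fig.\ \ref{fig-k=6}. Then I would look for an integer matrix $M \in GL(2,\mathbb{Z})$ (or more generally an affine-linear map on exponents) whose reduction modulo $6$ sends the exponent set of $P_6^{(5)}$ to that of $P_6^{(6)}$ — note that $M$ itself need not send one polygon to the other as a subset of $\mathbb{R}^2$, which is exactly why lattice equivalence fails while monomial equivalence can still hold. Such an $M$ induces a bijection of $(\mathbb{F}_7^*)^2$ (a relabelling of the $36$ torus points, hence a permutation matrix $\Pi$), and the possible mismatch in the constant monomial normalization is absorbed into a diagonal matrix $\Delta$; pulling these together yields $G_{P_6^{(6)}} = G_{P_6^{(5)}} \Delta \Pi$ as required.

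The key steps, in order, are: (1) record the two monomial bases from the figure; (2) reduce the exponent vectors modulo $6$ and search among the finitely many candidates in $GL(2,\mathbb{Z}/6)$ for one realizing the desired correspondence of exponent multisets — this is a small finite check and is where the argument really lives; (3) lift the chosen class to an actual $M \in GL(2,\mathbb{Z})$ (or directly specify the torus permutation it induces) and verify that it permutes $(\mathbb{F}_7^*)^2$; (4) assemble $\Delta$ and $\Pi$ and check $G_2 = G_1\Delta\Pi$ against Definition \ref{df2}. As a sanity check I would note that the full weight enumerators of the two codes over $\mathbb{F}_7$ agree (Table \ref{table-a.1}), which is a necessary condition consistent with the claimed equivalence.

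The main obstacle is step (2): finding the right transformation of exponents modulo $6$. It is not a generic lattice equivalence — one must genuinely exploit arithmetic mod $6$ particular to $q=7$, so some trial with the divisors of $6$ and the units mod $6$ is unavoidable, and one has to be careful that the map is a \emph{bijection} on exponent classes (so that it gives an honest permutation of torus points and does not collapse basis vectors). Once a valid $M$ is identified, the remaining verification that it produces a monomial equivalence is routine bookkeeping. I would also explicitly flag, as the paper promises in the introduction, that this furnishes the advertised example of monomially equivalent codes arising from lattice non-equivalent polygons.
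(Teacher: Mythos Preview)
Your approach is sound but quite different from the paper's. The paper's proof is a one-line appeal to computation: it reports that Magma (via the \texttt{IsEquivalent} routine on the generator matrices) confirms the monomial equivalence over $\mathbb{F}_7$, with no structural content. Your route instead seeks an explicit affine map on exponents modulo $q-1=6$, and this does succeed: taking $M=\left(\begin{smallmatrix}2&3\\5&1\end{smallmatrix}\right)\in GL(2,\mathbb{Z}/6)$ (determinant $5\in(\mathbb{Z}/6)^*$) and translation $c=(3,0)$, the map $v\mapsto Mv+c$ carries the exponent set $\{(0,0),(1,0),(0,1),(1,1),(2,2),(3,3)\}$ of $P_6^{(6)}$ bijectively onto that of $P_6^{(5)}$, namely $\{(0,0),(1,0),(2,0),(3,0),(0,1),(5,5)\}$, all read mod $6$. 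The linear part furnishes the permutation $\Pi$ of $(\mathbb{F}_7^*)^2$ and the translation furnishes the diagonal $\Delta$, exactly as you outline. What your argument buys over the paper's is an \emph{explanation}: the two polygons become ``lattice equivalent modulo $q-1$'' even though they are not lattice equivalent over $\mathbb{Z}$. Two points to make explicit in your write-up: first, the translation $c$ is essential here (one checks that no purely linear $M$ works, since $M(1,0)+M(0,1)=M(1,1)$ forces an incompatibility), so do not restrict to the linear case; second, your method presupposes that the monomial equivalence is realized by a monomial change of variables on the torus, which is not guaranteed a priori for arbitrary monomially equivalent toric codes, though it happens to hold in this instance.
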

\begin{proof}
We use the Magma program to give the generator matrices of these two toric
codes over $\mathbb{F}_7$. For the Magma code, please refer to \cite{j}.
\end{proof}

Unfortunately, the other pair $C_{P_6^{(4)}}$ and $C_{P_6^{(5)}}$ over
$\mathbb{F}_8$ can't be determined by the same way in Proposition
\ref{over F7}, since the command ``IsEquivalent" in Magma can only
be used to compare toric codes over $\mathbb{F}_q$ with $q=4$ or
small prime numbers. Moreover, it is infeasible to show the monomial equivalence directly from the definition. So we leave the problem open. Based on the result in Proposition \ref{over F7} and the fact that the enumerator polynomial of $C_{P_6^{(4)}}$ and
$C_{P_6^{(5)}}$ over $\mathbb{F}_8$ is exactly the same, we 
conjecture that this pair, i.e. $C_{P_6^{(4)}}$ and $C_{P_6^{(5)}}$ over
$\mathbb{F}_8$, is also monomially equivalent. Further, we ask a more general question for the interested readers: For which $q$ and $k$ are there monomially equivalent toric codes over $\mathbb{F}_q$ from polytopes that are not lattice equivalent?

Next, we shall classify $C_{P_6^{(i)}}$, $1\leq i\leq14$, for
$q\ge9$. The first invariant to be examined  is the minimum distance
(or the minimum weight), denoted as $d\left(C_{P_6^{(i)}}\right)$.
\begin{proposition}\label{d}
    According to $d\left(C_{P_6^{(i)}}\right)$, $1\leq
    i\leq14$, for $q\ge9$, no code in one of the five groups is monomially equivalent to a code in any of the other four groups:
	\begin{enumerate}
    	\item[(\romannumeral1)] $C_{P_6^{(1)}}$;
    	\item[(\romannumeral2)] $C_{P_6^{(2)}}$;
    	\item[(\romannumeral3)] $C_{P_6^{(14)}}$;
    	\item[(\romannumeral4)] $C_{P_6^{(i)}}$, for $3\leq i\leq8$;
    	\item[(\romannumeral5)] $C_{P_6^{(i)}}$, for $9\leq i\leq13$.
	\end{enumerate}
\end{proposition}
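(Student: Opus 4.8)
The plan is to compute, for each of the fourteen polygons in Fig.~\ref{fig-k=6}, the minimum distance $d(C_{P_6^{(i)}})$ as an explicit function of $q$, and to show that for $q\ge 9$ these fourteen functions take exactly five distinct values, the groups (\romannumeral1)--(\romannumeral5) being precisely their level sets. Since monomially equivalent codes have the same minimum distance, this proves the proposition. I expect the five values to be
$$
d(C_{P_6^{(1)}})=(q-1)(q-6),\qquad d(C_{P_6^{(2)}})=(q-1)(q-5),\qquad d(C_{P_6^{(14)}})=(q-2)(q-3),
$$
together with $d(C_{P_6^{(i)}})=(q-1)(q-4)$ for $3\le i\le 8$ and $d(C_{P_6^{(i)}})=(q-1)(q-3)$ for $9\le i\le 13$, and that for $q\ge 9$ these are strictly increasing in the order
$$
(q-1)(q-6)<(q-1)(q-5)<(q-1)(q-4)<(q-2)(q-3)<(q-1)(q-3),
$$
the only comparison not immediate being $(q-2)(q-3)<(q-1)(q-3)$, valid for every $q>3$.

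The three singleton groups follow from the formulas already available. The polygon $P_6^{(1)}={\rm conv}\{(0,0),(5,0)\}$ is a segment, so $\mathcal L(P_6^{(1)})$ consists of the polynomials in $x$ alone of degree $\le 5$; a nonzero such $f$ vanishes at no more than $5$ points of $\mathbb F_q^*$, each accounting for $q-1$ zeros in $(\mathbb F_q^*)^2$, whence the maximal number of torus zeros of an element of $\mathcal L(P_6^{(1)})$ is $5(q-1)$ and $d=(q-1)^2-5(q-1)$ (one may also read this off Theorem~\ref{sjx} applied to the degenerate triangle $P_{5,0}^\vartriangle$). Next, $P_6^{(2)}$ is the triangle $P_{4,1}^\vartriangle$, so Theorem~\ref{sjx} with $m=4$ gives $d=(q-1)^2-4(q-1)$; and $P_6^{(14)}$ is the rectangle $P_{2,1}^\square$, so Theorem~\ref{th4} gives $d=\big((q-1)-2\big)\big((q-1)-1\big)=(q-2)(q-3)$. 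Each of these identities holds whenever the polygon is contained in $\Box_{q-1}$, in particular for every $q\ge 9$.

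For the two larger groups I would prove a matching pair of bounds. First, record the full Minkowski length of each of the remaining polygons: $L(P_6^{(i)})=3$ for $3\le i\le 8$ and $L(P_6^{(i)})=2$ for $9\le i\le 13$. Indeed each polygon in the first range contains four collinear lattice points (three, in the second range), giving a lower bound for $L$; and a lattice polygon on at most six lattice points has Minkowski length $\ge 4$ only if it has five collinear points, and Minkowski length $\ge 3$ only if it has four collinear points or is lattice equivalent to $P_{2,1}^\square$, so a short inspection of Fig.~\ref{fig-k=6} (none of these polygons has the excluded feature) yields the matching upper bound for $L$. Applying a unimodular change of coordinates so that the longest lattice segment of $P_6^{(i)}$ runs through $(0,0),(1,0),\dots,(L,0)$, the polynomial $f=(x-a_1)\cdots(x-a_L)$ with the $a_j$ distinct in $\mathbb F_q^*$ lies in $\mathcal L(P_6^{(i)})$ and has exactly $L(q-1)$ zeros in the torus, so $d(C_{P_6^{(i)}})\le (q-1)^2-L(q-1)$. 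For the reverse inequality, the Minkowski-length estimate of \cite{ss} (Proposition~\ref{bound}), after checking that no length-$L$ factorization of an element of $\mathcal L(P_6^{(i)})$ contains an exceptional triangle, gives $d(C_{P_6^{(i)}})\ge (q-1)^2-L(q-1)$ for all $q$ beyond the explicit threshold of that proposition; should the exceptional-triangle hypothesis fail for some polygon, one uses instead the weaker bound $d\ge(q-1)^2-L(q-1)-2\sqrt q+1$ stated just before Proposition~\ref{bound}, which still separates the groups since their minimum distances differ by roughly $q$ while the error is $O(\sqrt q)$. Combined with the upper bound, this gives $d(C_{P_6^{(i)}})=(q-1)(q-4)$ for $3\le i\le 8$ and $=(q-1)(q-3)$ for $9\le i\le 13$ for all large $q$; for the finitely many smaller $q\ge 9$ the minimum distance is read directly from the weight enumerators computed by GAP (Tables~\ref{table-a.2}--\ref{table-a.3}).

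Putting the two paragraphs together, for $q\ge 9$ the minimum distances of the fourteen codes realize exactly the five polynomials displayed above; since monomial equivalence preserves the minimum distance, no $C_{P_6^{(i)}}$ in one of the groups (\romannumeral1)--(\romannumeral5) can be monomially equivalent to a $C_{P_6^{(j)}}$ in another. The step I expect to be the genuine obstacle is the exact determination of $d$ for groups (\romannumeral4) and (\romannumeral5): the explicit product of linear factors gives the upper bound at once, but the matching lower bound depends on Proposition~\ref{bound}, which both requires pinning down the full Minkowski length of every polygon and checking the exceptional-triangle hypothesis, and which only controls sufficiently large $q$, so the intermediate small-$q$ range is handled by the computer-algebra computation rather than in closed form. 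Everything else is routine bookkeeping with Pick's formula and the invariants of Proposition~\ref{facts of lattice equivalent}.
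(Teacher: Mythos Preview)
Your proposal is correct and follows essentially the same approach as the paper: compute $d$ explicitly for the singleton cases via Theorems~\ref{th4} and~\ref{sjx}, obtain the upper bound for the remaining polygons by evaluating an explicit product of $L$ linear factors, obtain the matching lower bound from Proposition~\ref{bound}, and fill in the finitely many small $q$ below the threshold by the GAP computations. The only notable difference is that the paper handles $P_6^{(3)}$ (and implicitly $P_6^{(13)}$) by a direct subcode/supercode sandwich---$C_{P_{3,1}^\vartriangle}\subset C_{P_6^{(3)}}\subset C_{P_{3,3}^\vartriangle}$ with both extremes having $d=(q-1)^2-3(q-1)$ by Theorem~\ref{sjx}---which yields the exact value for every $q$ without invoking Proposition~\ref{bound}; you instead fold $P_6^{(3)}$ into the uniform Minkowski-length argument, which also works but then requires the GAP check for small $q$ in that case too.
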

\begin{proof}
	 It follows directly from Theorem \ref{th4} and Theorem \ref{sjx} that for all $q$, we have $d\left(C_{P_6^{(1)}}\right)=(q-1)^2-5(q-1)$,
$d\left(C_{P_6^{(14)}}\right)=(q-1)^2-(3q-5)$,
$d\left(C_{P_6^{(2)}}\right)=(q-1)^2-4(q-1)$ and
$d\left(C_{P_6^{(13)}}\right)=(q-1)^2-2(q-1)$. Besides these four, we still
need to compute $d\left(C_{P_6^{(i)}}\right)$, $3\leq i\leq12$.

For $C_{P_6^{(3)}}$, it is a subcode of $C_{P_{3,3}^\triangle}$ with
$d\left(C_{P_{3,3}^\triangle}\right)=(q-1)^2-3(q-1)$, by Theorem \ref{sjx};
while it is also a supercode of $C_{P_{3,1}^\triangle}$ with the same
minimum distance as $C_{P_{3,3}^\triangle}$, by Theorem \ref{th4}.
Therefore, $d\left(C_{P_6^{(3)}}\right)=(q-1)^2-3(q-1)$ for all $q$.

For $d\left(C_{P_6^{(i)}}\right)$, $4\leq i\leq12$, they can be figured out in
the similar way for each $i$ by Proposition \ref{bound}. We illustrate
the argument for $d\left(C_{P_6^{(4)}}\right)$ in detail and leave the similar
work for $d\left(C_{P_6^{(i)}}\right)$, $5\leq i\leq12$, to the interested
readers. 
For $C_{P_6^{(4)}}$, $L\left(P_6^{(4)}\right)=3$, there is no factorization $f=f_1f_2f_3$ containing an exceptional triangle, $c=A/2-L+11/4=5/4$. For $q\ge\max\left(37,(c+\sqrt{c^2+2})^2\right)=37$, we get $d\left(C_{P_6^{(4)}}\right)\geq(q-1)^2-3(q-1)$. It is
easy to see that $C_{P_6^{(4)}}$ indeed contains the codewords,
which has $3(q-1)$ zeros in $(\mathbb{F}_q^*)^2$. For example, the codewords
come from the evaluation ${\rm ev}(d(x-a)(x-b)(x-c))$, where $a,b,c,d\in
\mathbb{F}_q^*$ and $a\neq b\neq c$. Therefore,
$d\left(C_{P_6^{(4)}}\right)=(q-1)^2-3(q-1)$. With the similar argument, we
conclude that when $q\ge37$, $d\left(C_{P_6^{(i)}}\right)=(q-1)^2-3(q-1)$, for $5\leq
i\leq 8$, and $d\left(C_{P_6^{(i)}}\right)=(q-1)^2-2(q-1)$, for $9\leq
i\leq12$. Using GAP (with toric package and guava package), the minimum distance for $7\le q<37$ can be computed: $d\left(C_{P_6^{(i)}}\right)=(q-1)^2-3(q-1)$, for $5\leq i \leq 8$ and $d\left(C_{P_6^{(i)}}\right)=(q-1)^2-2(q-1)$, for $9\leq
i\leq12$ for all $q\ge9$.

Since the minimum distance is one of the invariants of monomial equivalence of
codes, we summerize for $q\ge9$:
\begin{enumerate}
\item[(\romannumeral1)]$d\left(C_{P_6^{(1)}}\right)=(q-1)^2-5(q-1)$,
\item[(\romannumeral2)]$d\left(C_{P_6^{(2)}}\right)=(q-1)^2-4(q-1)$,
\item[(\romannumeral3)]$d\left(C_{P_6^{(14)}}\right)=(q-1)^2-(3q-5)$,
\item[(\romannumeral4)]$d\left(C_{P_6^{(i)}}\right)=(q-1)^2-3(q-1)$ for $3\leq i\leq 8$, 
\item[(\romannumeral5)]$d\left(C_{P_6^{(i)}}\right)=(q-1)^2-2(q-1)$ for $9\leq i\leq 13$.
\end{enumerate}
\end{proof}

Just according to the minimum distance, the monomial equivalence/inequivalence of any
two codes both from either group $\rm (\romannumeral4)$ or $\rm (\romannumeral5)$ in Proposition \ref{d} are still unknown. We shall examine two more invariants: the numbers of
the codewords with weight $(q-1)^2-2(q-1)$ and $(q-1)^2-(2q-3)$,
denote as $n_1\left(C_{P_6^{(i)}}\right)$ and $n_2\left(C_{P_6^{(i)}}\right)$, respectively.
We start with group $(\romannumeral4)$, i.e.  $C_{P_6^{(i)}}$, $3\leq i\leq8$. The group $(\romannumeral5)$, i.e. $C_{P_6^{(i)}}$, $9\leq i\leq13$, can be done similarly.

\begin{table}[!tbh]
\caption{}\label{table-2}
\begin{tabular}{|l|l|l|l|}
    \hline
    $C_{P_6^{(i)}}$& Distinct families of reducible polynomials&
        Number of codewords&$n_1\left(C_{P_6^{(i)}}\right)$\\
\hline
    $C_{P_6^{(3)}}$&$c(x-a)(x-b)$, $a, b, c\in \mathbb{F}_q^{*}$, $a\neq
    b$&${{q-1}\choose{2}}(q-1)$&$=4{{q-1}\choose{2}}(q-1)$\\
    &$cx(x-a)(x-b)$, $a, b,
c\in \mathbb{F}_q^{*}$, $a\neq b$&${{q-1}\choose{2}}(q-1)$&\\
    &$c(x-a)^2(x-b)$, $a, b, c\in \mathbb{F}_q^{*}$,
$a\neq b$&$2{{q-1}\choose{2}}(q-1)$&\\
\hline

    $C_{P_6^{(5)}}$&$c(x-a)(x-b), a, b, c\in \mathbb{F}_q^{*}, a\neq
    b$&${{q-1}\choose{2}}(q-1)$&$=4{{q-1}\choose{2}}(q-1)$\\
    &$c x(x-a)(x-b), a, b, c\in \mathbb{F}_q^{*}, a\neq
    b$&${{q-1}\choose{2}}(q-1)$&\\
    &$c(x-a)^2(x-b), a, b, c\in
\mathbb{F}_q^{*}, a\neq b$&$2{{q-1}\choose{2}}(q-1)$&\\
\hline

    $C_{P_6^{(6)}}$&$c(xy-a)(xy-b), a, b, c\in \mathbb{F}_q^{*}, a\neq
    b$&${{q-1}\choose{2}}(q-1)$&$=4{{q-1}\choose{2}}(q-1)$\\
    &$c xy(xy-a)(xy-b), a, b, c\in \mathbb{F}_q^{*}, a\neq b$&${{q-1}\choose{2}}(q-1)$&\\
    &$c(xy-a)^2(xy-b), a, b, c\in
\mathbb{F}_q^{*}, a\neq b$&$2{{q-1}\choose{2}}(q-1)$&\\
\hline

    $C_{P_6^{(7)}}$&$c(x-a)(x-b), a, b, c\in \mathbb{F}_q^{*}, a\neq
    b$&${{q-1}\choose{2}}(q-1)$&if $3|(q-1)$,\\
    &$c x(x-a)(x-b), a, b, c\in \mathbb{F}_q^{*},a\neq
    b$&${{q-1}\choose{2}}(q-1)$&$=4{{q-1}\choose{2}}(q-1)+$\\
    &$c(x-a)^2(x-b), a, b, c\in \mathbb{F}_q^{*}, a\neq
    b$&$2{{q-1}\choose{2}}(q-1)$&$\frac{2}{3}(q-1)^3$;\\
    &$\textup{if}\,\, 3|(q-1), 
cy^{-1}(y-a)(x^3-by), a,b, c\in \mathbb{F}_q^{*},$&&if $3\nmid q-1$,\\
    &$ \phantom{if 3|(q-1)}\textup{and}\,
ab\neq i^3$ for all $i\in \mathbb{F}_q^*$ &$\frac{2}{3}(q-1)^3$&$=4{{q-1}\choose{2}}(q-1)$\\
\hline

    $C_{P_6^{(4)}}$& $c(x-a)(x-b), a, b, c\in \mathbb{F}_q^{*}, a\neq
    b$&${{q-1}\choose{2}}(q-1)$&$=5{{q-1}\choose{2}}(q-1)$\\
    &$cx(x-a)(x-b), a, b, c\in \mathbb{F}_q^{*}, a\neq b$&${{q-1}\choose{2}}(q-1)$&\\
    &$c(x-a)^2(x-b), a, b, c\in \mathbb{F}_q^{*}, a\neq b$&$2{{q-1}\choose{2}}(q-1)$&\\
    &$cy^{-1}(y-a)(y-b), a, b, c\in \mathbb{F}_q^{*}, a\neq b$&${{q-1}\choose{2}}(q-1)$&\\
\hline

    $C_{P_6^{(8)}}$&$c(x-a)(x-b), a, b, c\in \mathbb{F}_q^{*}, a\neq
    b$&${{q-1}\choose{2}}(q-1)$&if $q\neq 2^n$, $n\in\mathbb{Z}_+$, \\
    &$c x^{-1}(x-a)(x-b), a, b, c\in \mathbb{F}_q^{*}, a\neq
    b$&${{q-1}\choose{2}}(q-1)$&$=5{{q-1}\choose{2}}(q-1)+$ \\
    &$cx^{-1}(x-a)^2(x-b), a, b, c\in \mathbb{F}_q^{*}, a\neq
    b$&$2{{q-1}\choose{2}}(q-1)$&$\frac{1}{2}(q-1)^3;$\\
    &$c y^{-1}(y-a)(y-b), a, b, c\in \mathbb{F}_q^{*},a\neq
    b$&${{q-1}\choose{2}}(q-1)$&if $q=2^{n}$, $n\in\mathbb{Z}_+$,\\
    &if $q\neq
2^{n}, cx^{-1}y^{-1}(y-ax)(b-xy),$&& $=5{{q-1}\choose{2}}(q-1)$\\
    & \phantom{if $q\neq
2^{m}$,}$n\in\mathbb{Z}_+$,$a, b, c\in
\mathbb{F}_q^{*}$,&&\\
    & \phantom{if }$a\neq b, \frac{a}{b}\neq\alpha^{2i}$ for
$1\leq i\leq q-2$&$\frac{1}{2}(q-1)^3$&\\
\hline
\end{tabular}
\end{table}

The basic idea to examine the pairwise monomial inequivalence of any
two codes $C_{P_6^{(i)}}$, $3\leq i\leq8$, is: 
\begin{enumerate}
	\item to find out
$n_1\left(C_{P_6^{(i)}}\right)$, $3\leq i\leq8$ and to sort the codes with the same $n_1\left(C_{P_6^{(i)}}\right)$ into subgroups to be determined later;
	\item to give the range of $n_2\left(C_{P_6^{(i)}}\right)$
among the codes with the same $n_1\left(C_{P_6^{(i)}}\right)$ and to compare them
to give the final classification. 
\end{enumerate}
Fortunately, in our situation,
these two invariants are enough to give a complete classification of monomial
equivalence class to $C_{P_6^{(i)}}$, $3\leq i\leq8$.

\begin{table}[!h]
\caption{}\label{table-3}
\begin{tabular}{|l|l|l|l|}
    \hline
    $C_{P_6^{(i)}}$&Distinct families of reducible polynomials&
        $\#$ of codewords&$n_2\left(C_{P_6^{(i)}}\right)$\\
\hline

    $C_{P_6^{(4)}}$, &None&$0$&$=0$\\
    $q=2^n$, $n\in\mathbb{Z}_+$&&&\\
\hline

    $C_{P_6^{(8)}}$,&$cx^{-1}y^{-1}(y-ax)(b-xy)$ and $a^{-1}b=\alpha^2$ &$(q-1)^3$&$(q-1)^3$\\
    \quad$q=2^n$,&for one unique $\alpha\in\mathbb{F}_q^*$&&\\
    \quad$m\in\mathbb{Z}_+$&&&\\
\hline
    $C_{P_6^{(3)}}$&$d(x-a)(y-bx-c),a,d\in\mathbb{F}_q^*$,&&\\
    &\phantom{d(x-a)}if $b=0$, $c\neq0$&$(q-1)^3$&$=6(q-1)^3$\\
    &\phantom{d(x-a)}or $c=0$, $b\neq0$&$(q-1)^3$&$+{{q}\choose{2}}(q-1)^3$;\\
    &\phantom{d(x-a)}or $b, c \neq 0$ and
    $a=-\frac{c}{b}$&$(q-1)^3$&\\
    &$a(y+bx^2+cx+d)(x+e),a,b,c,d,e\in\mathbb{F}_q^*$,&$\frac{1}{2}(q-1)^5$, $q\neq 2^n$&\\
    &~if $bx^2+cx+d$ is absolutely irreducible &${{q}\choose{2}}(q-1)^3$, $q=2^n$&\\
    &~or $bx^2+cx+d=b(x+e)^2$ when $q\neq 2^n$;&$(q-1)^3$, $q\neq 2^n$&\\
    &$a(y+bx^2+cx)(x+b^{-1}c), a,b,c\in\mathbb{F}_q^*$&$(q-1)^3$&\\
    
    &$a(y+bx^2-c)(x+d)$&&\\
    &~if $q\neq 2^n$ for all $n\in\mathbb{Z_+}$ &&\\
    &~and $b^{-1}c\neq \alpha^2$ for all $\alpha\in\mathbb{F}_q^*$&$\frac{1}{2}(q-1)^4$, $q\neq 2^n$&\\
    &~or if $q=2^n$ for some $n\in\mathbb{Z}_+$&$(q-1)^3$, $q=2^n$&\\
    &~ and $(-d)^2=b^{-1}c$&&\\
    &$a(y+bx^2)(x+c)$&$(q-1)^3$&\\
    
\hline

    $C_{P_6^{(5)}}$,&&&\\
    $q\geq27$ &None&$0$&$=0\ {^\dag}$\\
\hline

    $C_{P_6^{(6)}}$&$c(x-a)(y-b),a,b,c\in\mathbb{F}_q^*$&$(q-1)^3$&$=(q-1)^3$\\
\hline
    $C_{P_6^{(7)}}$,&c$y^{-1}(y-a)(y-bx^3)$&$(q-1)^3$&\\
    \quad $3\nmid(q-1)$&$cy^{-1}(y-ax^2)(y-bx),a, b, c\in
    \mathbb{F}_q^*$&$(q-1)^3$&$=2(q-1)^3$\\
\hline
\end{tabular}
	\begin{flushleft}
	{\footnotesize $^\dag$ Only if $q\geq47$, $n_2\left(C_{P_6^{(5)}}\right)$ could
be shown to be $0$. As a supplement, Table \ref{table-a.3} in the appendix illustrates
that $n_2\left(C_{P_6^{(5)}}\right)=0$, for $27\leq q\leq43$. When $q=25$,
although $n_2\left(C_{P_6^{(5)}}\right)$ is not zero, the exact enumerator
polynomials of $P_6^{(3)}$, $P_6^{(5)}$ and $P_6^{(6)}$ are listed
in Table \ref{table-a.3} explicitly.}
	\end{flushleft}
\end{table}

To be more precise, the way to compute $n_1\left(C_{P_6^{(i)}}\right)$ is
to enumerate the families of evaluations that contribute to
weight $(q-1)^2-2(q-1)$. The completeness of the enumeration above
is followed by Theorem \ref{th7}, which requires 
$q$ being large, say $q\geq23$ (this lower bound is given by the inequality in Proposition \ref{pr}) in most of the cases. Then, we
use the GAP code (with toric package and guava package) again to make up the gap $9\leq
q\leq19$, see Table \ref{table-a.2}. For $q\geq23$, with the help of
$n_1\left(C_{P_6^{(i)}}\right)$, we can exclude some codes and sort
the ones left into several subgroups with the same
$n_1\left(C_{P_6^{(i)}}\right)$. Then, by enumerating the families of
evaluations that contribute to weight $(q-1)^2-(2q-3)$, the range of
$n_2\left(C_{P_6^{(i)}}\right)$ can be obtained to classify the subgroups.
\begin{proposition}\label{n_1}
    For $q\ge9$, no two codes of $C_{P_6^{(i)}}$, $3\leq i\leq8$, are pairwise monomially equivalent.
\end{proposition}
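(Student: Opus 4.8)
The plan is to use two numerical invariants of monomial equivalence---the number $n_1$ of codewords of weight $(q-1)^2-2(q-1)$ and the number $n_2$ of codewords of weight $(q-1)^2-(2q-3)$---to separate the six codes $C_{P_6^{(i)}}$, $3\le i\le 8$, into singleton equivalence classes. By Proposition \ref{d} these six already lie in a single group under the minimum-distance invariant, so finer data is genuinely needed. First I would compute $n_1\bigl(C_{P_6^{(i)}}\bigr)$ for each $i$. A codeword has weight $(q-1)^2-2(q-1)$ exactly when the evaluating polynomial $f\in\mathcal L\bigl(P_6^{(i)}\bigr)$ vanishes at precisely $2(q-1)$ points of $(\mathbb{F}_q^*)^2$. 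The strategy is to enumerate, for each polygon, all factorization types $f=f_1\cdots f_r$ (using that the Newton polygon of a product is the Minkowski sum of the Newton polygons of the factors, so the factor polygons must Minkowski-sum into a subpolygon of $P_6^{(i)}$) and count the zero loci: a linear factor $x-a$ contributes a full line $q-1$ zeros, two distinct such factors give $2(q-1)$ zeros, a square $(x-a)^2$ still gives $q-1$ zeros, etc. Exceptional factors (binomials like $x^3-by$, whose zero count depends on $\gcd(3,q-1)$) must be tracked separately, which is the source of the case splits visible in Table \ref{table-2}. By Proposition \ref{pr} (Hasse--Weil), for $q\ge 23$ an absolutely irreducible $f$ with $I(P_f)=0$ has at most $q+1+0$ zeros, well below $2(q-1)$, and an irreducible $f$ with one interior point has at most $q+1+2\sqrt q$ zeros, still $<2(q-1)$ for $q\ge 9$ or so; this guarantees that the only way to reach $2(q-1)$ zeros is through the reducible families we have listed, so the enumeration in Table \ref{table-2} is complete for $q\ge 23$.

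Once $n_1$ is pinned down, I would read off from Table \ref{table-2} that it separates most of the six codes outright: $C_{P_6^{(4)}}$ and $C_{P_6^{(8)}}$ have $n_1=5\binom{q-1}{2}(q-1)$ (plus a $q$-dependent extra term for $C_{P_6^{(8)}}$ when $q$ is not a power of $2$), while $C_{P_6^{(3)}}$, $C_{P_6^{(5)}}$, $C_{P_6^{(6)}}$ have $n_1=4\binom{q-1}{2}(q-1)$, and $C_{P_6^{(7)}}$ has $n_1=4\binom{q-1}{2}(q-1)$ except when $3\mid q-1$, where the extra binomial family $y^{-1}(y-a)(x^3-by)$ pushes it up by $\tfrac23(q-1)^3$. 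So for $q$ with $3\nmid q-1$ the candidate collisions remaining are within the subgroup $\{C_{P_6^{(3)}},C_{P_6^{(5)}},C_{P_6^{(6)}},C_{P_6^{(7)}}\}$, and within $\{C_{P_6^{(4)}},C_{P_6^{(8)}}\}$ when $q$ is a power of $2$ (where the extra term for $C_{P_6^{(8)}}$ vanishes and both equal $5\binom{q-1}{2}(q-1)$). To break these ties I turn to $n_2$, computed the same way by enumerating the factorization families whose zero locus has size $2q-3$ (typically one full line plus a conic or two lines meeting the torus in $q-2$ residual points, or a line union an irreducible conic): Table \ref{table-3} shows $n_2=0$ for $C_{P_6^{(5)}}$ (for $q\ge 47$ by Hasse--Weil, and by the appendix tables for $27\le q\le 43$), $n_2=(q-1)^3$ for $C_{P_6^{(6)}}$, $n_2=2(q-1)^3$ for $C_{P_6^{(7)}}$ with $3\nmid q-1$, $n_2=6(q-1)^3+\binom q2(q-1)^3$ for $C_{P_6^{(3)}}$, and $n_2=0$ versus $(q-1)^3$ for the pair $C_{P_6^{(4)}},C_{P_6^{(8)}}$ over fields of characteristic $2$. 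These are pairwise distinct for all $q\ge 9$, completing the separation. The remaining range $9\le q\le 19$ (where Hasse--Weil is too weak to certify completeness of the enumerations) is handled by direct GAP computation recorded in Tables \ref{table-a.1} and \ref{table-a.2}.

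The main obstacle, as the tables already hint, is the bookkeeping of exceptional factorization types and the $q$-dependence they introduce. The clean Hasse--Weil bound only kicks in once $q$ is large relative to the area of the polygon (Proposition \ref{pr} needs $q\ge 23$ in these cases), so below that threshold one cannot argue abstractly that the listed families exhaust all codewords of the target weight---hence the reliance on explicit GAP runs for $9\le q\le 19$, and on the appendix tables for $C_{P_6^{(5)}}$ up to $q=43$. A second delicate point is that for $C_{P_6^{(7)}}$ and $C_{P_6^{(8)}}$ the count of zeros of a binomial factor such as $x^3-by$ or $a^{-1}b=\alpha^2$ depends on divisibility conditions ($3\mid q-1$, $q$ a power of $2$), so each such family must be split into sub-cases and the resulting contributions to $n_1$ and $n_2$ reassembled carefully; it is precisely in the characteristic-$2$ case that $C_{P_6^{(4)}}$ and $C_{P_6^{(5)}}$ escape this pair of invariants, which is why that one equivalence is left open. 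Modulo these case analyses---all of which are finite and mechanical once the factorization-by-Minkowski-sum principle is set up---the proposition follows by comparing the entries of Tables \ref{table-2} and \ref{table-3}.
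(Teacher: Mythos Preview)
Your proposal is correct and follows essentially the same route as the paper: enumerate reducible factorizations to compute $n_1$ (Table~\ref{table-2}), certify completeness via Hasse--Weil for $q\ge 23$, break the remaining ties with $n_2$ (Table~\ref{table-3}), and cover small $q$ by the GAP tables. One minor slip: your closing remark that $C_{P_6^{(4)}}$ and $C_{P_6^{(5)}}$ ``escape this pair of invariants'' in characteristic~$2$ is off---for every $q\ge 9$ their $n_1$ values already differ ($5\binom{q-1}{2}(q-1)$ versus $4\binom{q-1}{2}(q-1)$), and the undecided case is specifically $q=8$, which lies outside the scope of this proposition; also note the paper's GAP verification runs through $q\le 23$ (and $q=25$ separately), not just $q\le 19$.
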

\begin{proof}
	 We shall identify $n_1\left(C_{P_6^{(i)}}\right)$, for
$3\leq i\leq8$, one by one. Since the arguments are similar to that in the proof of Theorem 6 in \cite{ls2}, we shall
just investigate $n_1\left(C_{P_6^{(3)}}\right)$ in detail, for the sake of completeness; while for $n_1\left(C_{P_6^{(i)}}\right)$, $4\leq
i\leq8$, only the key differences will be stated in Table \ref{table-2}.
Interested readers can complete the arguments for
$n_1\left(C_{P_6^{(i)}}\right)$, $4\leq i\leq8$.

For $C_{P_6^{(3)}}$,
$n_1\left(C_{P_6^{(3)}}\right)\geq4{{q-1}\choose{2}}(q-1)$, because there are
three distinct families of reducible polynomials:
\begin{itemize}
     \item $c(x-a)(x-b)$, $a, b, c\in \mathbb{F}_q^{*}$, $a\neq b$, has
${{q-1}\choose{2}}(q-1)$ such codewords,
     \item $cx(x-a)(x-b)$, $a, b,
c\in \mathbb{F}_q^{*}$, $a\neq b$, has ${{q-1}\choose{2}}(q-1)$
codewords,
 	\item $c(x-a)^2(x-b)$, $a, b, c\in \mathbb{F}_q^{*}$,
$a\neq b$, has $2{{q-1}\choose{2}}(q-1)$ codewords.
\end{itemize}
Actually, we claim that there are exactly
4${q-1}\choose{2}$$(q-1)$ such codewords in $C_{P_6^{(3)}}$. Any
other such codewords could only come from evaluating a linear
combination of $\{1, x, x^2, x^3, y, xy\}$ in which at least one of $y$ and
$xy$ appears with nonzero coefficients (otherwise, we are in
a case covered previously).
\begin{table}[!h]
\caption{}\label{table-4}
\begin{tabular}{|l|l|l|l|}
\hline
    $C_{P_6^{(i)}}$& Distinct families of reducible polynomials&
        Number of codewords&$n_1\left(C_{P_6^{(i)}}\right)$\\
\hline
    $C_{P_6^{(9)}}$& $c(x-a)(x-b)$, $a, b, c\in \mathbb{F}_q^{*}$,
$a\neq b$&${{q-1}\choose{2}}(q-1)$&$=2{{q-1}\choose{2}}(q-1)$\\
    &$cy^{-1}(y-a)(y-b)$, $a, b,
c\in \mathbb{F}_q^{*}$, $a\neq b$&${{q-1}\choose{2}}(q-1)$&\\
\hline
    $C_{P_6^{(11)}}$&$c(y-a)(y-b), a, b, c\in \mathbb{F}_q^{*}, a\neq b,$&${{q-1}\choose{2}}(q-1)$&$=2{{q-1}\choose{2}}(q-1)$\\
    &$c x^{-1}y^{-1}(xy-a)(xy-b), a, b, c\in
\mathbb{F}_q^{*}, a\neq b$&${{q-1}\choose{2}}(q-1)$&\\
\hline
    $C_{P_6^{(12)}}$&$cy^{-1}(y-a)(y-b), a, b, c\in \mathbb{F}_q^{*},a\neq b$&${{q-1}\choose{2}}(q-1)$&if $q=2^m$,\\
    &$c x^{-1}(x-a)(x-b), a, b, c\in
\mathbb{F}_q^{*}, a\neq b$&${{q-1}\choose{2}}(q-1)$&\quad$=2{{q-1}\choose{2}}(q-1)$;\\
    &If $q\neq2^m,
cx^{-1}y^{-1}(y-ax)(b-xy),$&&if $q\neq2^m$,\\
    &\phantom{If $q\neq2^m$}$m\in\mathbb{Z}_+$,$a, b, c\in \mathbb{F}_q^{*},$&&\quad$=2{{q-1}\choose{2}}(q-1)+$\\
    &\phantom{If}$a\neq b,
\frac{a}{b}\neq\alpha^{2i}$, for $1\leq i\leq q-2$&$\frac{1}{2}(q-1)^3 $&\quad$\frac{1}{2}(q-1)^3 $ \\
\hline
    $C_{P_6^{(10)}}$&$c(x-a)(x-b), a, b, c\in \mathbb{F}_q^{*}, a\neq b$&${{q-1}\choose{2}}(q-1)$&$=3{{q-1}\choose{2}}(q-1)\ {^\dag}$\\
    &$c y^{-1}(y-a)(y-b), a, b, c\in
\mathbb{F}_q^{*},a\neq b$&${{q-1}\choose{2}}(q-1)$&\\
    &$c x^{-1}(y-ax)(y-bx), a, b, c\in
\mathbb{F}_q^{*}, a\neq b$&${{q-1}\choose{2}}(q-1)$&\\
\hline
    $C_{P_6^{(13)}}$&$c(y-a)(y-b), a, b, c\in \mathbb{F}_q^{*},
a\neq
    b$&${{q-1}\choose{2}}(q-1)$&$=3{{q-1}\choose{2}}(q-1)$\\
    &$c(x-a)(x-b), a, b, c\in \mathbb{F}_q^{*}, a\neq
    b$&${{q-1}\choose{2}}(q-1)$&\\
    &$c(x-ay)(x-by), a, b, c\in
\mathbb{F}_q^{*}, a\neq b$&${{q-1}\choose{2}}(q-1)$&\\
\hline
\end{tabular}
\begin{flushleft}
	{\footnotesize $^\dag$ only if $q\geq43$,
$n_1\left(C_{P_6^{(10)}}\right)=3{{q-1}\choose{2}}(q-1)$ could be proven. As a
supplement, Table \ref{table-a.3} in the appendix illustrates that
$n_1\left(C_{P_6^{(10)}}\right)$ is still $3{{q-1}\choose{2}}(q-1)$, when
$23\leq q\leq41$.}
\end{flushleft}
\end{table}

If either the coefficient of $y$ or that of $xy$ is zero, such polynomial will be absolutely irreducible. Indeed, if $y$ has nonzero coefficient and $xy$ doesn't, then by Proposition $\ref{pr}$, the number of zeros  of such polynomial $f$ in the torus $(\mathbb{F}_q^*)^2$ has a bound:$$Z(f)\le q+1+2I(P_f)\sqrt{q}\le q+1+2I\left(P_6^{(3)}\right)\sqrt{q}=q+1+0=q+1.$$
$q+1<2q-2$ for all $q>3$, so such polynomial can never have $2q-2$ zeros. If $xy$ has nonzero coefficient and $y$ doesn't, similarly argument implies that such polynomial can never have $2q-2$ zeros either. 

If both of them have nonzero coefficients, the only possible cases for the polynomial to be reducible are that : 
\begin{enumerate}
	\item[(\uppercase\expandafter{\romannumeral1})] $a(y+bx^2+cx+d)(x+e)$, $a,b,c,d,e\in\mathbb{F}_q^{*}$. It has zeros of two types: (1) $(-e, j)$ for $j\in \mathbb{F}_q^*$; (2)$(i,j)$ such that $j+bi^2+ci+d=0$. In the first type, there are $q-1$ zeros. In the second type, for every $i\in \mathbb{F}_q^*$, there is at most one $j$ such that $j+bi^2+ci+d=0$.

 Actually, if $bx^2+cx+d$ is absolutely irreducible, then there are exactly $q-1$ zeros of second type. In this case, there is exactly one common zero $(-e,j)$ of both types, where $j+be^2-ce+d=0$. Thus the polynomial has exactly $2q-3$ zeros.
 
 If $bx^2+cx+d=b(x+e_1)(x+e_2)$ where $e_1\neq e_2$ and $e_1,e_2\neq e$, then there are exactly $q-3$ zeros of second type. And there is exactly one common zero $(-e,j)$ of both types. Thus the polynomial has exactly $2q-5$ zeros.
 
 If $bx^2+cx+d=b(x+e_1)^2$ where $e_1\neq e$, then there are exactly $q-2$ zeros of second type. And there is exactly one common zero $(-e,j)$ of both types. Thus the polynomial has exactly $2q-4$ zeros.
 
 If $bx^2+cx+d=b(x+e)(x+e_1)$ where $e_1\neq e$, then there are exactly $q-3$ zeros of second type. And there is no common zeros of both types. Thus the polynomial has exactly $2q-4$ zeros.
 
 If $bx^2+cx+d=b(x+e)^2$,  then there are exactly $q-2$ zeros of second type. And there is no common zeros of both types. Thus the polynomial has exactly $2q-3$ zeros.

 So this kind of polynomials cannot have $2(q-1)$ zero points. 
 
\item[(\uppercase\expandafter{\romannumeral2})] $a(y+bx^2+cx)(x+d), a,b,c,d\in\mathbb{F}_q^*$. It has zeros of two types: (1)$(-d,j)$ for $j\in\mathbb{F}_q^*$; (2)(i,j) such that $j+bi^2+ci=0$. In the first type, there must be $q-1$ zeros. In the second type, there must be $q-2$ zeros. 
 
 When $d=b^{-1}c$, there is no common zero of both types. Thus the polynomial has exactly $2q-3$ zeros.
 
 When $d\neq b^{-1}c$, then there is exactly one common zero $(-d,j_0)$ of both types, where $j_0+bd^2-cd=0$. Thus the polynomial  has exactly $2q-4$ zeros.
 
\item[(\uppercase\expandafter{\romannumeral3})] $a(y+bx^2-c)(x+d), a,b,c,d\in\mathbb{F}_q^*$. It has zeros of two types: (1) $(-d,j)$ for $j\in\mathbb{F}_q^*$; (2) (i,j) such that $j+bi^2-c=0$. In the first type, there must be $q-1$ zeros. 
 
 If $q\neq 2^n$ for any $n\in\mathbb{Z}_+$, denote $\mathbb{F}_q^*$ as $\{\alpha^1,\alpha^2, \cdots, \alpha^q-1=1\}$. When $b^{-1}c=\alpha^{2k}$ for some $k\in\{1,2,\cdots,\frac{q-1}{2}\}$, then there are $q-3$ zeros of second type.  Thus the polynomial has at most $2q-4$ zeros. When $b^{-1}c\neq i^2$ for any $i\in\mathbb{F}_q^*$, then there are $q-1$ zeros of second type and there is exactly one common zero $(-d,j_0)$ of both types, where $j_0+bd^2-c=0$. Thus the polynomial has exactly $2q-3$ zeros.
 
 If $q=2^n$ for some $n\in\mathbb{Z}_+$, then there is one unique $i_0$ such that $i_0^2=b^{-1}c$. So there are $q-2$ zeros of second type. If $-d=i_0$, then there is no common zero of both types. Thus the polynomial has exactly $2q-3$ zeros. If $-d\neq i_0$, then the polynomial has exactly $2q-4$ zeros. 
 
\item[(\uppercase\expandafter{\romannumeral4})] $a(y+bx^2)(x+c), a,b,c\in\mathbb{F}_q^*$. It has exactly $2q-3$ zeros.
 
\item[(\uppercase\expandafter{\romannumeral5})] $a(x+by+c)(x+d)$. It has zeros of two types: (1)$(-d,j)$ for $j\in\mathbb{F}_q^*$; (2)$(i,j)$ such that $i+bj+c=0$ where $i,j\in\mathbb{F}_q^*$. In the first case, there are $q-1$ zeros. In the second type, for every $j\in\mathbb{F}_q^*$, there are at most one $i$ such that $i=-bj+c$. 

Let $j=-b^{-1}c$, then there is no $i$ such that $(i,-b^{-1}c)$ is a zero in the second type. So there are $q-2$ zeros in the second type of zeros. 

If $d\neq c$, then$(-d, b^{-1}(d-c))$ is a common zero of both types. So such polynomial has at exactly $2q-4$ zeros. 

  if $d=c$, then there is no common zeros of both types. So such polynomial has exactly $2q-3$ zeros.  

\item[(\uppercase\expandafter{\romannumeral6})] $a(x+by)(x+c), a,b,c\in\mathbb{F}_q^*$. It has $2q-3$ zeros.

\item[(\uppercase\expandafter{\romannumeral7})] $c(x+a)(y+b)$. It has $2q-3$ zeros.

\end{enumerate}
In a word, such reducible polynomials cannot have $2q-2$ zeros.

\begin{table}[!h]
\caption{}\label{table-5}	
\begin{tabular}{|l|l|l|l|}
\hline
    $C_{P_6^{(i)}}$&Distinct families of reducible polynomials&
        Number of codewords&$n_2\left(C_{P_6^{(i)}}\right)$\\
\hline
    $C_{P_6^{(10)}}$,&&&\\
    \quad$q\geq27$&None&$0$&$=0\ {^\dag}$\\
\hline
    $C_{P_6^{(13)}}$&$c(x-a)(y-b),a, b, c\in \mathbb{F}_q^{*}$&$>0$&$>0$\\
\hline
    $C_{P_6^{(9)}}$&$d(x-a)(y-bx-c)$,&&\\
    &\phantom{d(x-a)}if $b=0, a, c, d\in\mathbb{F}_q^{*}$&$(q-1)^3$&$=5(q-1)^3$\\
    &\phantom{d(x-a)}or $c=0, a, b, d\in \mathbb{F}_q^{*}$&$(q-1)^3$&\\
    &\phantom{d(x-a)}or $a, b, c, d\in\mathbb{F}_q^{*}, a=-\frac{c}{b}$&$(q-1)^3$&\\
    &$c y^{-1}(y-a)(xy-b), a, b, c\in\mathbb{F}_q^{*}$&$(q-1)^3$&\\
    &$c y^{-1}(y-a)(xy-dy-b)$,&&\\
    &\phantom{d(x-a)}$a, b, c, d\in\mathbb{F}_q^{*},d=-\frac{b}{a}$&$(q-1)^3$&\\
\hline
     $C_{P_6^{(11)}}$&$d(y-a)(x-by-c)$,&&\\
    &\phantom{d(x-a)}if $b=0, a, c,
    d\in\mathbb{F}_q^{*}$&$(q-1)^3$&$=3(q-1)^3$\\
    &\phantom{d(x-a)}or $c=0, a, b, d\in \mathbb{F}_q^{*}$&$(q-1)^3$&\\
    &\phantom{d(x-a)}or $a, b, c, d\in\mathbb{F}_q^{*},
    a=-\frac{c}{b}$&$(q-1)^3$&\\
\hline
    $C_{P_6^{(12)}}$,&$cx^{-1}(x-a)(y-b), a, b,
    c\in\mathbb{F}_q^{*}$&$(q-1)^3$&$\geq6(q-1)^3$\\
    \quad$q=2^m$,&$c x^{-1}(xy-a)(x-b), a, b,
    c\in\mathbb{F}_q^{*}$&$(q-1)^3$&\\
    \quad$m\in\mathbb{Z}_+$&$c y^{-1}(xy-a)(y-b), a, b, c\in\mathbb{F}_q^{*}$&$(q-1)^3$&\\
    &$c y^{-1}(y-a)(xy-dy-b), a, b, c,
    d\in\mathbb{F}_q^{*}$,&&\\
    &\phantom{d(x-a)}$d=-\frac{b}{a}$&$(q-1)^3$&\\
    &$c x^{-1}(x-a)(xy-dx-b), a, b, c, d\in\mathbb{F}_q^{*}$,&&\\
    &\phantom{d(x-a)}$d=-\frac{b}{a}$&$(q-1)^3$&\\
    &$c x^{-1}y^{-1}(xy-a)(y-bx), a, b, c,
    d\in\mathbb{F}_q^{*}$&$(q-1)^3$&\\
\hline
\end{tabular}
\begin{flushleft}
	{\footnotesize $^\dag$ Only if $q\geq47$, $n_2(C_{P_6^{(10)}})$
could be shown to be $0$. As a supplement, Table \ref{table-a.3} in the appendix
illustrates that $n_2\left(C_{P_6^{(10)}}\right)=0$, for $27\leq q\leq41$. When
$q=25$, although $n_2\left(C_{P_6^{(10)}}\right)$ is not zero, the exact
enumerator polynomials of $P_6^{(10)}$ and $P_6^{(13)}$ are listed
in Table \ref{table-a.3} explicitly.}
\end{flushleft}
\end{table}

When the polynomial is absolutely irreducible, by  Proposition $\ref{pr}$, we obtain that
$$Z(f)\le q+1+2I\left(P_f\right)\sqrt{q}\le q+1+2I\left(P_6^{(3)}\right)\sqrt{q}=q+1+0=q+1.$$

So over $\mathbb{F}_q$ when $q>3$, such polynomials
can't give the codewords with weight $(q-1)^2-2(q-1)$. The claim has been shown.

As we have seen, the key point in the argument above is to find out
the distinct families of reducible polynomials which evaluate to
give the codewords with weight $(q-1)^2-2(q-1)$. Table \ref{table-2}
lists the key information for $n_1\left(C_{P_6^{(i)}}\right)$, $3\leq
i\leq8$, with $q\geq 23$.

Let us explain where the lower bound of $q$ comes from briefly in the case $C_{P_6^{(4)}}$. Actually, for $q\ge23$, we claim that there are exactly
5${q-1}\choose{2}$$(q-1)$ such codewords in $C_{P_6^{(4)}}$. Any
other such codewords could only come from evaluating a linear
combination of $\{1, x, x^2, x^3, y, y^{-1}\}$ in which  both $\{x,x^2,x^3\}$ and
$\{y,y^{-1}\}$ appears with at least one element having nonzero coefficients (since otherwise we are in
a case previously covered). Such polynomial will be absolutely irreducible. The maximal polygon of such polynomials associates to the polynomials with $x^3,y,y^{-1}$ having nonzero coefficients, as $f=a_1+a_2x+a_3x^2+a_4x^3+a_5y+a_6y^{-1}$ where $a_4, a_5,a_6\neq 0$. By Proposition $\ref{pr}$, the number of zeros  of $f$ in the torus $(\mathbb{F}_q^*)^2$ has a bound:
$$Z(f)\le q+1+2I(P_f)\sqrt{q}=1+q+4\sqrt{q}.$$
When $q\ge23$, $Z(f)<2q-2$.
Thus such polynomial can never have $2q-2$ zeros when $q\geq23$. Any other smaller polygons have fewer interior points and then have lower upper bound. So all such polynomials can never have $2q-2$ zeros when $q\ge23$. The lower bound of $q\geq 23$ is also valid for other $C_{P_6^{(i)}}$, $5\leq i\leq8$. 

Due to Table \ref{table-2}, we still have the following three
cases to verify:
\begin{enumerate}
	\item When $q\ge 23$, any two codes of $C_{P_6^{(3)}}$, $C_{P_6^{(5)}}$ and
$C_{P_6^{(6)}}$ are pairwise monomially inequivalent;
	\item Over $\mathbb{F}_q$, where $3\nmid(q-1)$, $C_{P_6^{(7)}}$ and
any one codes of $C_{P_6^{(3)}}$, $C_{P_6^{(5)}}$, $C_{P_6^{(6)}}$
are pairwise monomially inequivalent;	
	\item Over $\mathbb{F}_q$, where $q=2^n$, $n\in\mathbb{Z}_+$,
$C_{P_6^{(4)}}$ and $C_{P_6^{(8)}}$ are pairwise monomially inequivalent.
\end{enumerate}

It is worth to mention that we could make sure $n_2\left(C_{P_6^{(4)}}\right)=0$ when $q=2^n$ and $n\ge 5$; and we could settle down the value of $n_2\left(C_{P_6^{(i)}}\right)$ for $i=6,7,8$ and $q\ge 25$. Thus, it is sufficient to check that no two enumerator polynomials of each
codes over $\mathbb{F}_q$, $q\leq 23$, in Table \ref{table-a.2} are exactly the same, which guarantees the codes are pairwise inequivalent in each cases above. The way
to find out $n_2\left(C_{P_6^{(i)}}\right)$, $3\leq i\leq8$, are similar to
that of $n_1\left(C_{P_6^{(3)}}\right)$ before. So we just list the key
information in Table \ref{table-3}, say the distinct families of reducible polynomials
which evaluate to give the codewords with weight $(q-1)^2-(2q-3)$,
of each codes as before. The verifications are left to interested
readers.

We have reached our conclusion, due to Table \ref{table-3}.
\end{proof}

Similarly, we could give a complete classification of monomial equivalence
class of $C_{P_6^{(i)}}$, $9\leq i\leq13$ as in Proposition
\ref{n_1}.
\begin{proposition}
    For $q\geq9$, no two codes of $C_{P_6^{(i)}}$, $9\leq i\leq13$, are monomially equivalent.
\end{proposition}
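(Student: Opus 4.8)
The plan is to transcribe, for the index range $9\le i\le 13$, the argument already used for Proposition \ref{n_1}. By Proposition \ref{d}, all five codes $C_{P_6^{(i)}}$ with $9\le i\le 13$ share the same minimum distance $(q-1)^2-2(q-1)$ for $q\ge 9$, so the minimum distance separates none of them; exactly as in Proposition \ref{n_1}, I would refine the classification using the two finer invariants $n_1\left(C_{P_6^{(i)}}\right)$ and $n_2\left(C_{P_6^{(i)}}\right)$, the numbers of codewords of weight $(q-1)^2-2(q-1)$ and $(q-1)^2-(2q-3)$ respectively.

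The first step is to pin down $n_1\left(C_{P_6^{(i)}}\right)$ for each $i$ by enumerating the distinct families of reducible polynomials in $\mathcal{L}\left(P_6^{(i)}\right)$ whose evaluations vanish at exactly $2(q-1)$ points of $(\mathbb{F}_q^*)^2$; the outcome is recorded in Table \ref{table-4}. As in Proposition \ref{n_1}, the substantive point is completeness: one must check, by a finite analysis of the possible Newton polygons of the two factors, that every remaining reducible factorization type has strictly fewer than $2(q-1)$ zeros, and one must rule out absolutely irreducible $f\in\mathcal{L}\left(P_6^{(i)}\right)$ by means of Proposition \ref{pr}. Since the number of interior lattice points of each $P_6^{(i)}$ and of each of its subpolygons is small (at most three, the extreme case being $P_6^{(10)}$), Proposition \ref{pr} gives $Z(f)\le q+1+2I\sqrt q<2(q-1)$ once $q$ exceeds an explicit bound, which is $q\ge 23$ in most cases and $q\ge 43$ for $C_{P_6^{(10)}}$, the range $23\le q\le 41$ then being covered by the GAP data in Table \ref{table-a.3}. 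Reading off Table \ref{table-4}: the value $n_1\left(C_{P_6^{(12)}}\right)=2\binom{q-1}{2}(q-1)+\frac12(q-1)^3$ for $q\ne 2^m$ is distinct from the other four; $n_1=2\binom{q-1}{2}(q-1)$ holds for $C_{P_6^{(9)}}$, $C_{P_6^{(11)}}$ and for $C_{P_6^{(12)}}$ when $q=2^m$; and $n_1=3\binom{q-1}{2}(q-1)$ holds for $C_{P_6^{(10)}}$ and $C_{P_6^{(13)}}$. So after this step only two subgroups of codes with equal $n_1$ remain to be separated.

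The second step is to compute $n_2\left(C_{P_6^{(i)}}\right)$ within each surviving subgroup by the same type of enumeration, whose output is Table \ref{table-5}. Among $C_{P_6^{(9)}}$, $C_{P_6^{(11)}}$ and $C_{P_6^{(12)}}$ (the last relevant here only for $q=2^m$) the values $n_2=5(q-1)^3$, $3(q-1)^3$ and $\ge 6(q-1)^3$ are pairwise distinct, so these three codes are pairwise monomially inequivalent. Within $\{C_{P_6^{(10)}},C_{P_6^{(13)}}\}$ one has $n_2\left(C_{P_6^{(13)}}\right)>0$, witnessed by the family $c(x-a)(y-b)$, whereas $n_2\left(C_{P_6^{(10)}}\right)=0$; here Proposition \ref{pr} forces the vanishing only for $q\ge 27$ (cleanly only for $q\ge 47$), so I would again appeal to Table \ref{table-a.3}, which records $n_2\left(C_{P_6^{(10)}}\right)=0$ for $27\le q\le 41$ and, for the anomalous value $q=25$, lists the full enumerator polynomials of $P_6^{(10)}$ and $P_6^{(13)}$ explicitly, so that they still differ. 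The invariant arguments above all require $q$ to be large, so for the remaining prime powers $9\le q\le 19$ I would instead invoke the GAP computations in Tables \ref{table-a.2}--\ref{table-a.3} and check directly that the weight enumerators of $C_{P_6^{(9)}},\dots,C_{P_6^{(13)}}$ are pairwise distinct. Putting these pieces together yields the claim.

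The main obstacle, just as in Proposition \ref{n_1}, is the completeness of the two enumerations of reducible polynomials: one has to list \emph{every} factorization type whose evaluation can have $2(q-1)$ or $2q-3$ zeros in the torus and to bound all of the remaining ones, and one has to pin down, case by case, the exact threshold on $q$ below which the Hasse-Weil estimate of Proposition \ref{pr} is too weak -- most delicately for the pair $C_{P_6^{(10)}}$, $C_{P_6^{(13)}}$ and for the exceptional value $q=25$, where a direct GAP computation of the enumerator polynomials is needed.
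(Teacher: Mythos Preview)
Your proposal is correct and follows essentially the same approach as the paper: separate the five codes first by $n_1$ using Table~\ref{table-4} (with the Hasse--Weil bound of Proposition~\ref{pr} to certify completeness for large $q$, GAP data for the residual range, and the special threshold $q\ge 43$ for $C_{P_6^{(10)}}$), then resolve the surviving ties via $n_2$ from Table~\ref{table-5}, handling the anomalous $q=25$ and the small prime powers $9\le q\le 19$ by direct comparison of enumerator polynomials in Tables~\ref{table-a.2}--\ref{table-a.3}. Your sketch is in fact more explicit than the paper's own ``sketch of the proof,'' and the main caveat you flag---the case-by-case completeness of the reducible-factorization enumerations---is exactly the point the paper leaves to the reader.
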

\begin{proof}[Sketch of the proof.] We classify the toric codes of
$C_{P_6^{(i)}}$, $9\leq i\leq13$, by $n_1\left(C_{P_6^{(i)}}\right)$, see Table \ref{table-4}.

Based on the information in Table \ref{table-4}, there are three cases left to be
determined by $n_2\left(C_{P_6^{(i)}}\right)$:
\begin{enumerate}
	\item $C_{P_6^{(9)}}$ and $C_{P_6^{(11)}}$ are monomially inequivalent;
	\item $C_{P_6^{(10)}}$ and $C_{P_6^{(13)}}$ are monomially inequivalent;
	\item Over $\mathbb{F}_q$, where $q=2^m$, $m\in\mathbb{Z}_+$,
$C_{P_6^{(12)}}$ and any one of $C_{P_6^{(9)}}$, $C_{P_6^{(11)}}$
are monomially inequivalent.
\end{enumerate}

We collect the key information about $n_2\left(C_{P_6^{(i)}}\right)$ in Table \ref{table-5}. Our conclusion has been reached by Table \ref{table-5}.\end{proof}

\appendix 
\section*{Acknowledgements}

The Magma  Program and GAP Program is provided by John Little, to whom we
are extremely thankful, especially for his patience and helpful
discussions. The work of X. Luo is supported by Beihang University start-up fund (Grant No. YWF-14-RSC-026). S. S.-T. Yau thanks the start-up fund from Tsinghua University. And H. Zuo gratefully acknowledges the support of NSFC (Grant No. 11401335) and the start-up fund from Tsinghua University.

\bigskip

\renewcommand{\thetable}{A.\arabic{table}}
\section*{Tables of the Enumerator Polynomials for Toric Codes}
\setcounter{table}{0}

In this appendix, we list all the tables mentioned in our proof of
Theorem \ref{th2}, where the weight enumerator polynomials are
defined as follows:
$$
W_C(x)=\sum_{i=0}^{(q-1)^2} A_ix^i
$$
where $A_i=|\{w\in C:wt(w)=i\}|$, for the $k=6$ toric codes. All the
polynomials are computed by using GAP code with guava package and toric package from [Joy].
\begin{table}
\caption{}\label{table-a.1}
\begin{tabular}{cll}
\hline Over $\mathbb{F}_{7}$&${P_6}^{(1)}$:&$1+36x^{6}+540x^{12}+4320x^{18}+\cdots$\\
& ${P_6}^{(2)}$:&$1+90x^{12}+600x^{18}+2790x^{24}+\cdots$\\
& ${P_6}^{(3)}$:&$1+120x^{18}+360x^{24}+5832x^{25}+\cdots$\\
& ${P_6}^{(4)}$:&$1+120x^{18}+810x^{24}+2160x^{26}+\cdots$\\
& ${P_6}^{(5)}$:&$1+120x^{18}+360x^{24}+648x^{25}+\cdots$\\
& ${P_6}^{(6)}$:&$1+120x^{18}+360x^{24}+648x^{25}+\cdots$\\
& ${P_6}^{(7)}$:&$1+120x^{18}+576x^{24}+216x^{25}+\cdots$\\
& ${P_6}^{(8)}$:&$1+120x^{18}+774x^{24}+2376x^{26}+\cdots$\\
& ${P_6}^{(9)}$:&$1+180x^{24}+1080x^{25}+2916x^{26}+\cdots$\\
& ${P_6}^{(10)}$:&$1+270x^{24}+432x^{25}+4212x^{26}+\cdots$\\
& ${P_6}^{(11)}$:&$1+180x^{24}+1080x^{25}+2700x^{26}+\cdots$\\
& ${P_6}^{(12)}$:&$1+288x^{24}+1728x^{25}+2484x^{26}+\cdots$\\
& ${P_6}^{(13)}$:&$1+270x^{24}+1296x^{25}+4860x^{26}+\cdots$\\
& ${P_6}^{(14)}$:&$1+540x^{20}+180x^{24}+1944x^{25}+\cdots$\\
\hline

Over $\mathbb{F}_{8}$&${P_6}^{(1)}$:&$1+147x^{14}+1470x^{21}+10535x^{28}+\cdots$\\
& ${P_6}^{(2)}$:&$1+245x^{21}+1225x^{28}+558x^{35}+\cdots$\\
& ${P_6}^{(3)}$:&$1+245x^{28}+588x^{35}+11662x^{36}+\cdots$\\
& ${P_6}^{(4)}$:&$1+245x^{28}+735x^{35}+1029x^{36}+\cdots$\\
& ${P_6}^{(5)}$:&$1+245x^{28}+735x^{35}+1029x^{36}+\cdots$\\
& ${P_6}^{(6)}$:&$1+245x^{28}+588x^{35}+686x^{36}+\cdots$\\
& ${P_6}^{(7)}$:&$1+245x^{28}+588x^{35}+1715x^{36}+\cdots$\\
& ${P_6}^{(8)}$:&$1+245x^{28}+735x^{35}+343x^{36}+\cdots$\\
& ${P_6}^{(9)}$:&$1+294x^{35}+1715x^{36}+4459x^{37}+\cdots$\\
& ${P_6}^{(10)}$:&$1+49x^{28}+441x^{35}+2058x^{37}+\cdots$\\
& ${P_6}^{(11)}$:&$1+294x^{35}+2058x^{36}+4116x^{37}+\cdots$\\
& ${P_6}^{(12)}$:&$1+294x^{35}+3430x^{36}+4116x^{37}+\cdots$\\
& ${P_6}^{(13)}$:&$1+441x^{35}+2058x^{36}+9261x^{37}+\cdots$\\
& ${P_6}^{(14)}$:&$1+1029x^{30}+294x^{35}+3087x^{36}+\cdots$\\
\hline
Over $\mathbb{F}_{9}$&${P_6}^{(1)}$:&$1+448x^{24}+3360x^{32}+22848x^{40}+\cdots$\\
& ${P_6}^{(2)}$:&$1+560x^{32}+2240x^{40}+10304x^{48}+\cdots$\\
& ${P_6}^{(3)}$:&$1+448x^{40}+896x^{48}+21504x^{49}+\cdots$\\
& ${P_6}^{(4)}$:&$1+448x^{40}+1888x^{48}+2048x^{50}+\cdots$\\
& ${P_6}^{(5)}$:&$1+448x^{40}+1408x^{48}+1536x^{49}+\cdots$\\
& ${P_6}^{(6)}$:&$1+448x^{40}+896x^{48}+2048x^{49}+\cdots$\\
& ${P_6}^{(7)}$:&$1+448x^{40}+1408x^{48}+1024x^{49}+\cdots$\\
& ${P_6}^{(8)}$:&$1+448x^{40}+1376x^{48}+4864x^{50}+\cdots$\\
& ${P_6}^{(9)}$:&$1+448x^{48}+2516x^{49}+7168x^{50}+\cdots$\\
& ${P_6}^{(10)}$:&$1+2208x^{48}+8704x^{51}+17280x^{52}+\cdots$\\
& ${P_6}^{(11)}$:&$1+448x^{48}+2408x^{49}+4608x^{50}+\cdots$\\
& ${P_6}^{(12)}$:&$1+704x^{48}+4608x^{49}+7936x^{50}+\cdots$\\
& ${P_6}^{(13)}$:&$1+672x^{48}+3072x^{49}+16128x^{50}+\cdots$\\
& ${P_6}^{(14)}$:&$1+1792x^{42}+448x^{48}+4608x^{49}+\cdots$\\
\hline
\end{tabular}
\end{table}

\begin{table}
\caption{}\label{table-a.2}
\begin{tabular}{cll}
 \hline
Over $\mathbb{F}_{11}$&${P_6}^{(3)}$:&$1+1200x^{70}+1800x^{80}+61000x^{81}+\cdots$\\
&${P_6}^{(4)}$:&$1+1200x^{70}+2250x^{80}+2000x^{82}+\cdots$\\
&${P_6}^{(5)}$:&$1+1200x^{70}+2000x^{80}+4000x^{82}+\cdots$\\
& ${P_6}^{(6)}$:&$1+1200x^{70}+2000x^{80}+1000x^{81}+\cdots$\\
& ${P_6}^{(7)}$:&$1+1200x^{70}+1800x^{80}+2000x^{81}+\cdots$\\
& ${P_6}^{(8)}$:&$1+1200x^{70}+2750x^{80}+2500x^{82}+\cdots$\\
& ${P_6}^{(9)}$:&$1+900x^{80}+5000x^{81}+18000x^{82}+\cdots$\\
& ${P_6}^{(10)}$:&$1+1350x^{80}+3000x^{81}+3000x^{82}+\cdots$\\
& ${P_6}^{(11)}$:&$1+1100x^{80}+3000x^{81}+11000x^{82}+\cdots$\\
& ${P_6}^{(12)}$:&$1+1400x^{80}+10000x^{81}+19500x^{82}+\cdots$\\
& ${P_6}^{(13)}$:&$1+1350x^{80}+6000x^{81}+40500x^{82}+\cdots$\\
\hline
Over $\mathbb{F}_{13}$&${P_6}^{(3)}$:&$1+2640x^{108}+3168x^{120}+145152x^{121}+\cdots$\\
& ${P_6}^{(4)}$:&$1+2640x^{108}+4248x^{120}+3456x^{122}+\cdots$\\
& ${P_6}^{(5)}$:&$1+2640x^{108}+3168x^{120}+1728x^{122}+\cdots$\\
& ${P_6}^{(6)}$:&$1+2640x^{108}+3168x^{120}+1728x^{121}+\cdots$\\
& ${P_6}^{(7)}$:&$1+2640x^{108}+4320x^{120}+1728x^{121}+\cdots$\\
& ${P_6}^{(8)}$:&$1+2640x^{108}+4824x^{120}+864x^{122}+\cdots$\\
& ${P_6}^{(9)}$:&$1+1584x^{120}+8640x^{121}+38016x^{122}+\cdots$\\
& ${P_6}^{(10)}$:&$1+2376x^{120}+5184x^{123}+32832x^{124}+\cdots$\\
& ${P_6}^{(11)}$:&$1+1584x^{120}+5184x^{121}+19008x^{122}+\cdots$\\
& ${P_6}^{(12)}$:&$1+2448x^{120}+19008x^{121}+40608x^{122}+\cdots$\\
& ${P_6}^{(13)}$:&$1+2376x^{120}+10368x^{121}+85536x^{122}+\cdots$\\
\hline
Over $\mathbb{F}_{16}$&${P_6}^{(3)}$:&$1+6825x^{180}+6300x^{195}+425250x^{196}+\cdots$\\
& ${P_6}^{(4)}$:&$1+6825x^{180}+7875x^{195}+3375x^{197}+\cdots$\\
& ${P_6}^{(5)}$:&$1+6825x^{180}+6975x^{195}+13500x^{199}+\cdots$\\
& ${P_6}^{(6)}$:&$1+6825x^{180}+6975x^{195}+3375x^{196}+\cdots$\\
& ${P_6}^{(7)}$:&$1+6825x^{180}+8550x^{195}+3375x^{196}+\cdots$\\
& ${P_6}^{(8)}$:&$1+6825x^{180}+7875x^{195}+3375x^{196}+\cdots$\\
& ${P_6}^{(9)}$:&$1+3150x^{195}+16875x^{196}+94500x^{197}+\cdots$\\
& ${P_6}^{(10)}$:&$1+4725x^{195}+40500x^{198}+20250x^{199}+\cdots$\\
& ${P_6}^{(11)}$:&$1+3825x^{195}+10125x^{196}+47250x^{197}+\cdots$\\
& ${P_6}^{(12)}$:&$1+3150x^{195}+47250x^{196}+94500x^{197}+\cdots$\\
& ${P_6}^{(13)}$:&$1+4725x^{195}+20250x^{196}+212625x^{197}+\cdots$\\
\hline
Over $\mathbb{F}_{17}$&${P_6}^{(3)}$:&$1+8960x^{208}+7680x^{224}+581632x^{225}+\cdots$\\
& ${P_6}^{(4)}$:&$1+8960x^{208}+9600x^{224}+4096x^{228}+\cdots$\\
& ${P_6}^{(5)}$:&$1+8960x^{208}+7680x^{224}+8192x^{228}+\cdots$\\
& ${P_6}^{(6)}$:&$1+8960x^{208}+7680x^{224}+4096x^{225}+\cdots$\\
& ${P_6}^{(7)}$:&$1+8960x^{208}+7680x^{224}+8192x^{225}+\cdots$\\
& ${P_6}^{(8)}$:&$1+8960x^{208}+11648x^{224}+2048x^{226}+\cdots$\\
& ${P_6}^{(9)}$:&$1+3840x^{224}+20480x^{225}+122880x^{226}+\cdots$\\
& ${P_6}^{(10)}$:&$1+5760x^{224}+12288x^{227}+24576x^{229}+\cdots$\\
& ${P_6}^{(11)}$:&$1+3840x^{224}+12288x^{225}+61440x^{226}+\cdots$\\
& ${P_6}^{(12)}$:&$1+5888x^{224}+53248x^{225}+129024x^{226}+\cdots$\\
& ${P_6}^{(13)}$:&$1+5760x^{224}+24576x^{225}+276480x^{226}+\cdots$\\
\hline
\end{tabular}
\end{table}
\begin{table}
\begin{tabular}{cll}
\hline
Over $\mathbb{F}_{19}$&${P_6}^{(3)}$:&$1+14688x^{270}+11016x^{288}+1032264x^{289}+\cdots$\\
& ${P_6}^{(4)}$:&$1+14688x^{270}+13770x^{288}+5832x^{292}+\cdots$\\
& ${P_6}^{(5)}$:&$1+14688x^{270}+11016x^{288}+5832x^{292}+\cdots$\\
& ${P_6}^{(6)}$:&$1+14688x^{270}+11016x^{288}+5832x^{289}+\cdots$\\
& ${P_6}^{(7)}$:&$1+14688x^{270}+14904x^{288}+5832x^{289}+\cdots$\\
& ${P_6}^{(8)}$:&$1+14688x^{270}+16686x^{288}+2916x^{290}+\cdots$\\
& ${P_6}^{(9)}$:&$1+5508x^{288}+29160x^{289}+198288x^{290}+\cdots$\\
& ${P_6}^{(10)}$:&$1+8262x^{288}+17496x^{290}+5832x^{292}+\cdots$\\
& ${P_6}^{(11)}$:&$1+5508x^{288}+17496x^{289}+99144x^{290}+\cdots$\\
& ${P_6}^{(12)}$:&$1+8424x^{288}+81648x^{289}+207036x^{290}+\cdots$\\
& ${P_6}^{(13)}$:&$1+8262x^{288}+34992x^{289}+4416148x^{290}+\cdots$\\
\hline
Over $\mathbb{F}_{23}$&${P_6}^{(3)}$:&$1+33880x^{418}+20328x^{440}+2757832x^{441}+\cdots$\\
& ${P_6}^{(4)}$:&$ 1+33880x^{418}+25410x^{440}+53240x^{448}+\cdots$\\
& ${P_6}^{(5)}$:&$ 1+33880x^{418}+20328x^{440}+10648x^{445}+\cdots$\\
& ${P_6}^{(6)}$:&$ 1+33880x^{418}+20328x^{440}+10648x^{441}+\cdots$\\
& ${P_6}^{(7)}$:&$1+33880x^{418}+20328x^{440}+21296x^{441}+\cdots$\\
& ${P_6}^{(8)}$:&$ 1+33880x^{418}+30734x^{440}+5324x^{442}+\cdots$\\
& ${P_6}^{(9)}$:&$1+10164x^{440}+53240x^{441}+447216x^{442}+\cdots$\\
& ${P_6}^{(10)}$:&$1+15246x^{440}+31944x^{446}+63888x^{447}+\cdots$\\
& ${P_6}^{(11)}$:&$1+10164x^{440}+31944x^{441}+223608x^{442}+\cdots$\\
& ${P_6}^{(12)}$:&$1+15488x^{440}+170368x^{441}+463188x^{442}+\cdots$\\
& ${P_6}^{(13)}$:&$1+15246x^{440}+63888x^{441}+1006236x^{442}+\cdots$\\
\hline
\end{tabular}
\end{table}

\bigskip

\begin{table}
\caption{}\label{table-a.3}
\begin{tabular}{cll}
\hline
Over $\mathbb{F}_{25}$&$P_6^{(3)}$:&$1+48576x^{504}+26496x^{528}+4230144x^{529}+\cdots$\\
&$P_6^{(5)}$:&$1+48576x^{504}+26496x^{528}+13824x^{529}+69120x^{536}+\cdots$\\
&$P_6^{(6)}$:&$1+48576x^{504}+26496x^{528}+13824x^{529}+13824x^{536}+\cdots$\\
&$P_6^{(10)}$:&$1+19872x^{528}+13824x^{529}+41472x^{536}+\cdots$\\
&$P_6^{(13)}$:&$1+19872x^{528}+82944x^{529}+1430784x^{530}+\cdots$\\
\hline

Over $\mathbb{F}_{27}$&$P_6^{(5)}$:&$1+67600x^{598}+33800x^{624}+70304x^{633}+\cdots$\\
&$P_6^{(10)}$:&$1+25350x^{624}+52728x^{630}+105456x^{633}+\cdots$\\
\hline

Over $\mathbb{F}_{29}$&$P_6^{(5)}$:&$1+91728x^{700}+42336x^{728}+21952x^{737}+\cdots$\\
&$P_6^{(10)}$:&$1+31752x^{728}+131712x^{736}+65856x^{738}+\cdots$\\
\hline

Over $\mathbb{F}_{31}$&$P_6^{(5)}$:&$1+121800x^{810}+52200x^{840}+27000x^{849}+\cdots$\\
&$P_6^{(10)}$:&$1+39150x^{840}+27000x^{847}+81000x^{852}+\cdots$\\
\hline

Over $\mathbb{F}_{32}$&$P_6^{(5)}$:&$1+139345x^{868}+57660x^{899}+59582x^{910}+\cdots$\\
&$P_6^{(10)}$:&$1+43245x^{899}+893730x^{913}+1787460x^{914}+\cdots$\\
\hline

Over $\mathbb{F}_{37}$&$P_6^{(5)}$:&$1+257040x^{1188}+90720x^{1224}+46656x^{1236}+\cdots$\\
&$P_6^{(10)}$:&$1+68040x^{1224}+139968x^{1237}+279936x^{1238}+\cdots$\\
\hline

Over $\mathbb{F}_{41}$&$P_6^{(5)}$:&$1+395200x^{1480}+124800x^{1520}+320000x^{1536}+\cdots$\\
&$P_6^{(10)}$:&$1+93600x^{1520}+192000x^{1535}+64000x^{1537}+\cdots$\\
\hline

Over $\mathbb{F}_{43}$&$P_6^{(5)}$:&$1+482160x^{1638}+144648x^{1680}+10584x^{1694}+\cdots$\\
&$P_6^{(10)}$:&$1+108486x^{1680}+10584x^{1687}+296352x^{1699}+\cdots$\\
\hline
\end{tabular}
\end{table}

\end{document}